\documentclass[letterpaper, twocolumn, 10pt]{article}

\usepackage{usenix2019_v3}
\usepackage[]{hotpow}
\setrepo{https://github.com/pkel/hotpow/tree/arxiv_v3}

\begin{document}

\title{\theTitle}
\author{\theAuthorsLong}
\date{}

\maketitle

\begin{abstract}
  A fundamental conflict of many proof-of-work systems is that they want to
achieve inclusiveness and security at the same time. We analyze and resolve
this conflict with a theory of proof-of-work quorums, which enables a new
bridge between Byzantine and Nakamoto consensus.  The theory yields stochastic
uniqueness of quorums as a function of  a security parameter.  We employ the
theory in \mbox{\theProtocol{}}\ifanonprot{\footnote{Protocol name changed for
double-blind review.}}{}, a scalable permissionless distributed log protocol
that supports finality based on the pipelined three-phase commit previously
presented for HotStuff~\cite{yin2019HotStuffBFT}. 
We evaluate \theProtocol{} and variants with adversarial modifications by
simulation. Results show that the protocol can tolerate network latency, churn,
and targeted attacks on consistency and liveness with a small storage overhead
compared to plain Nakamoto consensus and less complexity than protocols that
rely on sidechains for finality.

\end{abstract}

\begin{figure*}
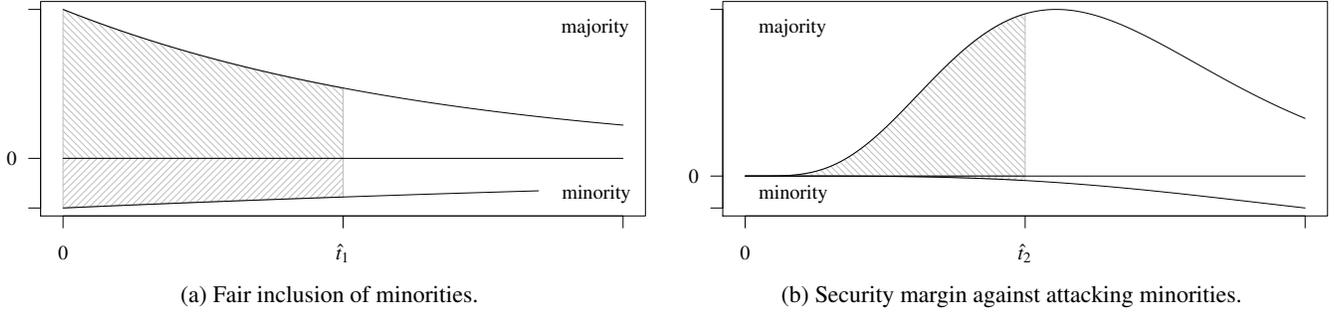

  \begin{subfigure}{0.490\linewidth}
    \resizebox{\linewidth}{!}{\input{figures/minority_vs_majority_exponential.tex}}
    \caption{Fair inclusion of minorities.}
    \label{fig:min_maj_exp}
  \end{subfigure}
  \hfill
  \begin{subfigure}{0.490\linewidth}
    \resizebox{\linewidth}{!}{\input{figures/minority_vs_majority_gamma.tex}}
    \caption{Security margin against attacking minorities.}
    \label{fig:min_maj_gamma}
  \end{subfigure}
  \caption{Probability densities of exponential (left) and gamma distributions
    (right) as functions over time for a $2/3$ majority and a $1/3$ minority
    (with flipped $y$-axis). The area under each curve represents the odds of
  winning a race.}
  \label{fig:min_maj}
\end{figure*}

\section{Introduction}

Bitcoin surprised scholars in distributed systems, as well as in security~\cite{bonneau2015SoKResearch}. Authors have
called the new composition of known concepts a ``sweet
spot''~\cite{tschorsch2016BitcoinTechnical} in the design space for protocols, and praised the
complex way the components are put together as a ``true leap of
insight''~\cite{narayanan2017BitcoinAcademic} of Nakamoto~\cite{nakamoto2008BitcoinPeertopeer}.
Likely the most intriguing part is the way Bitcoin uses proof-of-work puzzles to secure a distributed log.%

The role of proof-of-work in Nakamoto consensus can be contemplated in several ways. First and most intuitively, the computational puzzles can be interpreted as a rate limit on new identities, which discourage Sybil attacks~\cite{douceur2002SybilAttack} in a lottery for blocks and new coins. Second, proof-of-work can be conceived as a game-proof variant of a probabilistic back-off mechanism, as used in media access control in computer networks. It reduces the risk of collisions when many nodes concurrently seek write access to a shared medium, the ledger.
Proof-of-work has been formalized in cryptographic security models of Nakamoto consensus~\cite{garay2015BitcoinBackbone,pass2017AnalysisBlockchain}. However, we are not aware of work pointing out the fundamental conflict between inclusiveness and security inherent to the way proof-of-work is used in the known distributed log protocols. 

This conflict precludes reliable and fast commits. Arguably, it is the reason why practical protocols trade finality for eventual consistency. 
But the lack of finality limits the applicability for %
high-value transactions~\cite{bonneau2016WhyBuy,gervais2016SecurityPerformance}, a potential show-stopper discussed even beyond the technical community \cite{budish2018EconomicLimits,auer2019DoomsdayEconomics}.

We tackle this conflict directly, leading to a theory of proof-of-work quorums, which enables new ways of using proof-of-work in permissionless distributed log protocols. We propose one such protocol, \theProtocol{}, demonstrating that finality with reliable and short time to commit is possible. Specifically, we do not rely on sidechains, a tool used in the literature to stack Byzantine on top of Nakamoto consensus~\cite{kogias2016EnhancingBitcoin,pass2017HybridConsensus,pass2018ThunderellaBlockchains}. Sidechains can add finality and increase throughput at the price of increased complexity, overhead, and tricky issues in the synchronization between layers~\cite{kogias2016EnhancingBitcoin,eyal2016BitcoinNGScalable}.

The proposed protocol is inspired by two recent breakthroughs:
Bobtail~\cite{bissias2020BobtailImproved} and HotStuff~\cite{yin2019HotStuffBFT}. The former
optimizes stochastic properties of the block delay in Nakamoto consensus. 
The latter adapts principles of Byzantine
fault tolerance to blockchains in a clever way. It has received attention after Facebook's announcement to use it in LibraBFT~\cite{calibra2019librabft}.

We make the following contributions:
\begin{enumerate}
  \item We draw attention to a fundamental conflict between inclusiveness and
    security in \nc{} and propose a principled resolution
    (Section~\ref{sec:intuition}).
  \item We develop a theory of proof-of-work quorums where quorums
    are formed over votes generated by stochastic processes.
    We show that sufficiently large quorums are practically unique (Section~\ref{sec:pow_quorum}).
  \item We propose \theProtocol{}, a protocol that finds
    consensus over a distributed log without requiring pre-defined identities.
    \theProtocol{} scales at least as well as practical blockchain
    protocols and much better than Byzantine fault tolerance protocols.
    It relies on proof-of-work, but, unlike deployed systems using the
    longest chain rule, our construction supports a three-phase commit logic.
    State updates (transactions) are final after a predictable amount of time,
    and the probability of inconsistency is bounded according to our theory
    (Section~\ref{sec:protocol}).
  \item We simulate executions of \theProtocol{} as well as of variants with adversarial
    modifications. The results show that the protocol can tolerate
    network latency, churn, and targeted attacks on consistency and liveness at
    small %
    overhead compared to the best deployed systems (Section~\ref{sec:evaluation}).
\end{enumerate}
Section~\ref{sec:discussion} compares \theProtocol{} to related works and
discusses its limitations. Section~\ref{sec:conclusion} concludes.
For replicability and future research, we make the protocol implementation and the simulation code available online.\repofootnote{}

\section{Intuition} \label{sec:intuition}

The key conflict between inclusiveness and security faced by cryptocurrencies is as follows:
\emph{minorities should be encouraged to participate (inclusiveness),
but they should not be able to make decisions alone (security).} \nc{}
achieves inclusiveness by  sacrificing security for an uncertain period of
time (eventual consistency).
This becomes problematic when irreversible real-world actions are taken based
on unsettled transactions in the distributed log (double spending). A short and reliable time to commit would mitigate this risk.

Recall that \nc{} prioritizes inclusiveness by using  a puzzle as gatekeeper to
participation.  The protocol specifies a repeated race for the first puzzle solution.
Each winner proposes a state update and receives some reward.  Most
cryptocurrencies use puzzles---moderately hard functions---for which iterative trial and error is the best
known solving algorithm. Such puzzles imply exponentially distributed solving
time.  Figure~\ref{fig:min_maj_exp} shows the probability distributions for the
solving times of a $2/3$ majority of solving power compared to a $1/3$
minority. The expected time of the end of the race is marked with $\hat{t}_1$
(in Bitcoin $\hat{t}_1 \approx 10$ minutes). Consequently, the area under each
curve represents the odds of winning the race. Observe that the minority has a
fair chance. This makes the protocol inclusive, but also implies that
minorities have a significant chance of directly writing state updates. For
improved security, we would prefer a distribution such that the minority's area
under the curve is small (ideally negligible), as displayed in
Figure~\ref{fig:min_maj_gamma}.

Since the puzzle of \nc{} behaves like in Figure~\ref{fig:min_maj_exp}, a
single state update is not reliable. As a result, users are recommended to wait
for multiple consecutive blocks before acting upon a payment. The time needed for
sequentially solving $k$ exponential puzzles is gamma distributed with shape
parameter~$k$.  In fact, Figure~\ref{fig:min_maj_gamma} shows the %
gamma distribution for $k = 6$. Note the significant gap between minority and majority: it is unlikely that a minority
can generate a sequence of $6$~state updates before the majority does so. In
this sense, multiple puzzle solutions qualify a majority, while a single one does not.

In \nc{}, security comes at the price of waiting for multiple
solutions. Bitcoin's convention of $k=6$ implies an expected waiting time of
$\hat{t}_2 \approx 60$ minutes, which is arguably too slow for many
applications. Besides, \nc{} does not give a rationale on how to choose $k$.

A key idea for resolving this conflict is to break the one-to-one relationship
between puzzle solutions and blocks.  Instead of requiring a single $10$ minute
puzzle per block, \theProtocol{} asks for $k$ easier puzzles each expected to
take $10/k$ minutes.
In other words, \theProtocol{} achieves security by appending puzzle solutions
\emph{in parallel} rather
than sequentially, as illustrated in Figure~\ref{fig:seq_vs_par}. Since the
puzzles are independent, we end up with the same block rate but $k$ times
the number of solutions. The expected computational effort stays the same, but
we accumulate a qualifying number of solutions for \emph{every} block. This means we
get the shape of Figure~\ref{fig:min_maj_gamma} much faster: $\hat{t}_2 \approx
\hat{t}_1$.

\begin{figure}
  \centering
  \resizebox{0.9\linewidth}{!}{\includegraphics{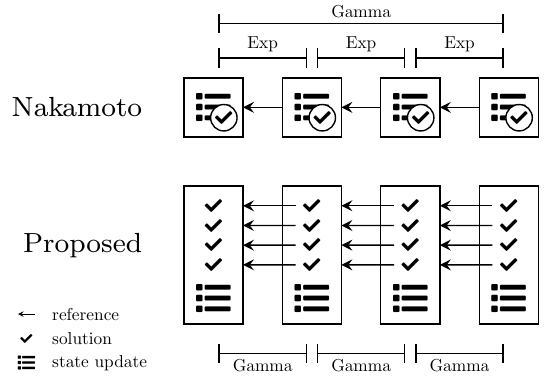}}
  \caption{Sequential puzzles in \nc{} imply exponentially distributed
    block delays. Multiple smaller puzzles in parallel lead to a
  gamma distribution for each block.}
  \label{fig:seq_vs_par}
\end{figure}

For a principled construction of \theProtocol{}, we reduce the payload ``authenticated''~\cite{back2014EnablingBlockchain} by proof-of-work to a minimum: %
\begin{enumerate}
  \item a reference to a recent point in time (\eg{} a hash link to the last seen block)
  \item a reference to an identity (public key or commitment)
\end{enumerate}
A triple of a puzzle solution and these two references forms a verifiable
ephemeral identity.
The puzzle solution binds resources in order to prevent Sybil attacks,
the reference in time ensures freshness, and
the identifier enables authorized actions, such as claiming a reward.

The main difference between proof-of-work systems and the well-studied class of
Byzantine fault tolerant (BFT) systems~%
\cite{lamport1982ByzantineGenerals, dwork1988ConsensusPresence, castro2002PracticalByzantine}
is that the former do not rely on
external identification of the participating nodes.
Inspired by the early work of \citet{aspnes2005Exposingcomputationallychallenged}, \theProtocol{} uses proof-of-work to bootstrap ephemeral identities and plugs them
into HotStuff~\cite{yin2019HotStuffBFT}, a state of the art blockchain-based BFT system.
In HotStuff, each block carries a certificate about a
qualified majority of nodes (quorum) confirming the last seen block.
HotStuff's proof of finality is based on the qualifying properties of each
quorum. This motivates us to explore whether and to what extent a set of
proof-of-work solutions can qualify a majority. In Section~%
\ref{sec:pow_quorum}, we will show that qualifying majorities are possible
within a single block. This allows us to transfer HotStuff's finality to the
permissionless setting.

The recurse to HotStuff enables us to fix the number of blocks to wait before
accepting a state update as final at the necessary number of phases to commit,
thereby resolving a drawback of \nc{}. As illustrated in
Figure~\ref{fig:pipeline}, HotStuff uses a three-phase commit, which can be
pipelined for subsequent state updates on a blockchain. In a nutshell, the
first phase locks a single proposal, the second phase confirms majority uptake
of this lock, and the third phase ensures that the knowledge of this knowledge
is propagated. We refer to~\cite{yin2019HotStuffBFT} for the rationales and
failure modes. In this sense, \theProtocol{} parallelizes not only puzzle solutions but
also the phases of the commit logic.

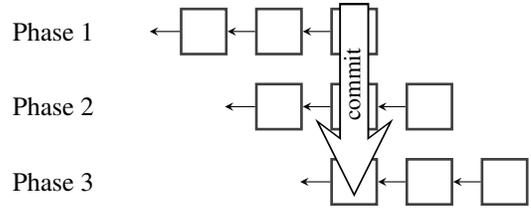
\begin{figure}
  \begin{center}
    \begin{tikzpicture}[>=stealth]
      \begin{scope}[color=darkgray, line width=1pt, minimum width=6mm,minimum height=6mm]

        \draw (0,0) node [draw] (a1) {};
        \draw (1,0) node [draw] (a2) {};
        \draw (2,0) node [draw] (a3) {};

        \draw (1,-1) node [draw] (b1) {};
        \draw (2,-1) node [draw] (b2) {};
        \draw (3,-1) node [draw] (b3) {};

        \draw (2,-2) node [draw] (c1) {};
        \draw (3,-2) node [draw] (c2) {};
        \draw (4,-2) node [draw] (c3) {};

      \end{scope}

      \foreach \c in {a,b,c}
      { \draw [->] (\c2.west)--(\c1.east);
        \draw [->] (\c3.west)--(\c2.east);
        \draw [->] (\c1.west)--++(-4mm,0);
      }

      \draw (-2, 0) node {Phase 1};
      \draw (-2,-1) node {Phase 2};
      \draw (-2,-2) node {Phase 3};

      \draw [line width=4mm,->] (2,.38) -- (2,-2.2);
      \draw [color=white,line width=3.5mm,->] (2,.35) -- (2,-2.15);
      \node [rotate=90] at (1.975,-0.7) {\small commit};

    \end{tikzpicture}
  \end{center}
  \vspace{-2ex}
  \caption{Pipelined three-phase commit on a blockchain in HotStuff and
  \theProtocol{}.}
  \label{fig:pipeline}
\end{figure}

Another advantage of the gamma distribution per block is a reduction
in the variance of block delays compared to the exponential distribution
implied by the puzzle.  While the commit pipeline gives us finality after three blocks, the reduced variance translates this into a reliable time to commit.
The theory in the following section shows formally how all this is related to the
quorum size, \theProtocol{}'s new security parameter.

\section{Proof-of-Work Quorums} \label{sec:pow_quorum}

Quorums are central to the design and analysis of BFT protocols. 
The typical Byzantine setting assumes a set of $n =
3f + 1$ identified nodes, of which at most $f$ deviate from the protocol. A set of $2f + 1$ votes for the same value is
called a quorum. If correct nodes vote at most once, quorums imply a majority decision and thus are unique.
The uniqueness may be violated in two situations.

\begin{enumerate}[label=BFT-\theenumi, wide=0pt, leftmargin=*]
  \item More than $n$ nodes vote. \label{bft-network}
  \item More than $f$ nodes vote more than once. \label{bft-adversary}
\end{enumerate}
Practical systems avoid~\ref{bft-network} using preset identities
for all nodes and rule out~\ref{bft-adversary} by assumption.

Proof-of-work enables systems where
agents can join and leave at any time without obtaining permission from an
identity provider or gatekeeper~\cite{nakamoto2008BitcoinPeertopeer}.  This difference is
often implied in the terms ``permissioned'' and ``permissionless''.
In the permissionless case one must distinguish between \emph{agents} and \emph{nodes}. Agents are entities participating in a distributed system. An agent can
operate any number of nodes.  Colluding parties are interpreted as a single agent. %

We introduce the notion \emph{proof-of-work quorum} for a set of votes where each vote
requires a solution to a proof-of-work puzzle. Since the puzzle solving time is probabilistic, the uniqueness of quorums cannot be absolute. In
contrast to the Byzantine setting, we have to consider three failure modes:

\begin{enumerate}[label=PoW-\theenumi, wide=0pt, leftmargin=*]
  \item The total compute power of the network is higher than assumed.
    \label{pow-network}
  \item The adversary controls more than the assumed fraction of compute power.
    \label{pow-adversary}
  \item  A random bad realization happens.
    \label{pow-probability}
\end{enumerate}

The failure modes \ref{pow-network} and \ref{pow-adversary} correspond to the
Byzantine failure modes \ref{bft-network} and \ref{bft-adversary}. Our goal is
to understand the new failure mode \ref{pow-probability} and how it affects the
potential ambiguity (violation of uniqueness) of quorums.

\begin{definition}[Proof-of-work process] \label{def:process}
  A proof-of-work process is a stochastic count process where each event
  assigns one \emph{ability to vote} (ATV) to one agent. Each ATV can be used
  by the agent it is assigned to, to vote once for one value.
\end{definition}

We adopt the notion of a quorum from the BFT literature~\cite{malkhi1998ByzantineQuorum, yin2019HotStuffBFT} except that we will apply it to votes from ATVs rather than identified nodes.

\begin{definition}[$\qsize$-quorum] \label{def:quorum}
  A set of $\qsize$ votes for the same value~$x$ is called a $\qsize$-quorum for~$x$.
\end{definition}

Observing a $\qsize$-quorum implies that at least $\qsize$~ATVs have been used, hence the
proof-of-work process must have assigned at least~$\qsize$ ATVs. This connects to time.

\begin{definition}[Optimistic quorum time] \label{def:oqt}
  The time at which the proof-of-work process assigns the $\qsize$-th ATV is
  called optimistic $\qsize$-quorum time.  For a proof-of-work process $P$
  and quorum size $\qsize$ it is formally defined by the random variable
  \[
    T_{P,\qsize} := \inf\{t \in \realsgez \mid P(t) \geq \qsize\} \,.
  \]
\end{definition}

$T_{P,\qsize}$ is the earliest point in time at which a $\qsize$-quorum
is feasible.  A $\qsize$-quorum is only possible at exactly $T_{P,\qsize}$, if all
assigned ATVs are used to vote for the same value. 

\newcommand{\poa}{\ensuremath{\operatorname{poa}}}

A quorum for $x$ is ambiguous if there is another quorum for $y \neq x$.
Since each ATV can be used for at most one value, ambiguous $\qsize$-quorums
are only possible when the proof-of-work process has assigned at least $2\qsize$ ATVs.

\begin{definition}[Probability of ambiguity] \label{def:poa}
  For a proof-of-work process $P$ and quorum size~\qsize{} we define the
  \emph{probability of ambiguity} (POA) as
  \[
    \poa_{P, \qsize}(t) := \prob{P(t) \geq 2 \qsize} \,.
  \]
\end{definition}

For puzzles where the best known solving algorithm is independent trial and error, the stochastic process is instantiated by the Poisson process $P_\lambda$.
This is because if each puzzle solution generates one ATV,
the time between consecutive ATVs is exponentially distributed with rate
$\lambda$.

\begin{lemma} \label{lem:poa_poisson}
  The POA for the Poisson process $P_\lambda$  is given by
  \[
    \poa_{P_\lambda, \qsize}(t) =
    1 - e^{-\lambda t} \sum_{i=0}^{2\qsize -1}{\frac{(\lambda t)^i}{i!}} \,.
  \]
\end{lemma}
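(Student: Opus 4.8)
The statement is an explicit formula for the probability of ambiguity of a Poisson process. Recall that the probability of ambiguity is defined as $\poa_{P_\lambda,\qsize}(t) = \prob{P_\lambda(t) \geq 2\qsize}$, so the plan is simply to compute this tail probability for a Poisson random variable. First I would invoke the defining property of the Poisson process $P_\lambda$: the count $P_\lambda(t)$ at time $t$ is a Poisson-distributed random variable with parameter $\lambda t$, so that $\prob{P_\lambda(t) = i} = e^{-\lambda t}(\lambda t)^i / i!$ for each nonnegative integer $i$. This is the standard fact connecting a Poisson process to the Poisson distribution, and is exactly the probabilistic content flagged in the paragraph preceding the lemma (exponential inter-arrival times with rate $\lambda$).

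The core computation is then to rewrite the tail event as the complement of a finite sum. Concretely,
\[
  \poa_{P_\lambda,\qsize}(t) = \prob{P_\lambda(t) \geq 2\qsize} = 1 - \prob{P_\lambda(t) \leq 2\qsize - 1} = 1 - \sum_{i=0}^{2\qsize-1} \prob{P_\lambda(t) = i}.
\]
Substituting the Poisson point masses and factoring out the common term $e^{-\lambda t}$ yields
\[
  \poa_{P_\lambda,\qsize}(t) = 1 - e^{-\lambda t} \sum_{i=0}^{2\qsize-1} \frac{(\lambda t)^i}{i!},
\]
which is precisely the claimed expression. This completes the proof.

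There is essentially no hard step here: the result follows directly from the Poisson distribution of the process's marginals together with the complement rule for a finite lower tail. The only point warranting a sentence of care is justifying that $P_\lambda(t)$ is genuinely Poisson with mean $\lambda t$ — but this is the textbook characterization of a homogeneous Poisson process and is already implicitly asserted by the surrounding discussion of exponential inter-arrival times, so I would cite it rather than re-derive it. Thus the entire lemma reduces to one line of evaluation once the marginal distribution is named.
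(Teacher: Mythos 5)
Your proof is correct and follows essentially the same route as the paper: identify the marginal distribution $P_\lambda(t)\sim\distPoisson(\lambda t)$, then compute $\prob{P_\lambda(t)\geq 2\qsize}$ as one minus the lower tail. The only cosmetic difference is that the paper derives the marginal explicitly from the axiomatic properties of the Poisson process (setting $s=0$ in the stationary-increments property and using $P_\lambda(0)=0$) and then plugs into the Poisson CDF, whereas you cite the marginal as a standard fact and sum the point masses directly — these are the same argument.
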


\begin{proof} See Appendix~\ref{apx:proofs}.\end{proof}

\begin{lemma} \label{lem:qt_possion}
  The optimistic \qsize{}-quorum time for the Poisson process is Erlang
  distributed with shape parameter~\qsize{} and rate parameter~$\lambda$, in
  short
  \[
    T_{P_\lambda,\qsize} \drawn \distErlang(\qsize, \lambda) \,.
  \]
\end{lemma}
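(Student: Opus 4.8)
The plan is to identify $T_{P_\lambda,\qsize}$ as the waiting time until the $\qsize$-th event of the Poisson process and to show directly that its cumulative distribution function is that of the Erlang law. The starting observation is that a Poisson process is a counting process whose sample paths are non-decreasing, right-continuous step functions that increase by exactly one at each event. Consequently, for every $t \in \realsgez$ the event $\{T_{P_\lambda,\qsize} \leq t\}$ coincides with $\{P_\lambda(t) \geq \qsize\}$: the process has accumulated at least $\qsize$ events by time $t$ if and only if its $\qsize$-th event occurred no later than $t$. I would spell out this equivalence first, since it is the one place where the sample-path structure actually enters — monotonicity and right-continuity are what guarantee the infimum in Definition~\ref{def:oqt} is attained, turning the equivalence into an honest equality of events rather than an almost-sure statement.

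Given the equivalence, I would compute the distribution function using only the Poisson marginal $\prob{P_\lambda(t) = i} = e^{-\lambda t}(\lambda t)^i/i!$:
\[
  \prob{T_{P_\lambda,\qsize} \leq t}
  = \prob{P_\lambda(t) \geq \qsize}
  = 1 - e^{-\lambda t}\sum_{i=0}^{\qsize-1}\frac{(\lambda t)^i}{i!} \,.
\]
This is already the textbook CDF of $\distErlang(\qsize,\lambda)$, so the lemma follows essentially by inspection; note the structural parallel with Lemma~\ref{lem:poa_poisson}, which is the identical computation with threshold $2\qsize$ in place of $\qsize$. To make the identification self-contained I would differentiate the right-hand side in $t$ and verify the two resulting sums telescope, leaving the Erlang density $\lambda^{\qsize} t^{\qsize-1} e^{-\lambda t}/(\qsize-1)!$.

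An alternative route, worth stating for intuition, uses the remark preceding the lemma that consecutive ATVs are separated by independent $\distExp(\lambda)$ gaps. Then $T_{P_\lambda,\qsize}$ is a sum of $\qsize$ independent exponential waiting times, and the Erlang law is by definition the distribution of such a sum — obtainable by an induction on $\qsize$ that convolves one exponential density onto an $\distErlang(\qsize-1,\lambda)$ density, or equivalently by multiplying moment generating functions to get $\bigl(\lambda/(\lambda-s)\bigr)^{\qsize}$.

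I do not expect a genuine obstacle here, as the result is classical. The only two points that demand care are (i) the event equivalence, which must explicitly invoke the monotone right-continuous sample paths rather than being asserted, and (ii) keeping the summation indices straight in the telescoping differentiation, where the top term $i=\qsize-1$ of the first sum is the one that survives as the Erlang density while every lower term cancels against the index-shifted second sum. Both are routine, so the proof is short.
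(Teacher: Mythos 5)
Your proposal is correct, and your primary argument takes a genuinely different route from the paper's. The paper's proof is precisely what you sketch as the ``alternative route'': it notes that the times between consecutive count events of $P_\lambda$ are independent and exponentially distributed with rate $\lambda$, and then cites the textbook fact that a sum of $\qsize$ i.i.d.\ exponential random variables is $\distErlang(\qsize,\lambda)$ --- three sentences, no computation. Your main argument instead establishes the waiting-time/counting duality $\{T_{P_\lambda,\qsize}\leq t\}=\{P_\lambda(t)\geq \qsize\}$ (correctly flagging that right-continuous, non-decreasing sample paths are what make the infimum in Definition~\ref{def:oqt} attained) and then reads the Erlang CDF off the Poisson marginal, with the telescoping differentiation as a sanity check. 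What your route buys: it is self-contained, needing only the marginal law of $P_\lambda(t)$ rather than the joint independence structure of inter-arrival times, and it exposes the structural identity with Lemma~\ref{lem:poa_poisson} --- the two lemmas become the same computation with thresholds $\qsize$ and $2\qsize$, which the paper's presentation obscures. What the paper's route buys: brevity, and a characterization of the Erlang law (sum of i.i.d.\ exponentials) that is arguably the more natural probabilistic definition, at the cost of silently relying on the independence of inter-arrival gaps and an external citation. Both proofs are sound; your explicit treatment of the event equivalence is a detail the paper does not address at all.
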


\begin{proof} See Appendix~\ref{apx:proofs}.\end{proof}

\newcommand{\tEv}{\ensuremath{\bar{t}_{\lambda, \qsize}}}

\begin{corollary}\label{cor:tEv}
  The expected optimistic $\qsize$-quorum time for the Poisson process is
  \[
    \tEv := \ev{T_{P_\lambda,\qsize}} = \qsize / \lambda\,.
  \]
\end{corollary}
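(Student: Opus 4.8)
The plan is to read this off Lemma~\ref{lem:qt_possion} together with the standard mean of the Erlang family; since the corollary only concerns the first moment, no new probabilistic argument is required. I would present it in a self-contained way by making the additive structure of the Poisson process explicit, so that the computation does not silently re-derive the distributional statement of the lemma.

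First I would recall that for the Poisson process $P_\lambda$ the interarrival times $X_1, X_2, \dots$ between successive ATV assignments are independent and each exponentially distributed with rate $\lambda$, so that $\ev{X_i} = 1/\lambda$. The optimistic quorum time $T_{P_\lambda,\qsize}$ is by Definition~\ref{def:oqt} the instant at which the $\qsize$-th ATV is assigned, which is exactly the sum of the first $\qsize$ interarrival times,
\[
  T_{P_\lambda,\qsize} = \sum_{i=1}^{\qsize} X_i .
\]
This is the representation underlying the $\distErlang(\qsize,\lambda)$ law of Lemma~\ref{lem:qt_possion}. Then, by linearity of expectation (which needs no independence), I obtain
\[
  \ev{T_{P_\lambda,\qsize}} = \sum_{i=1}^{\qsize} \ev{X_i} = \frac{\qsize}{\lambda},
\]
which is precisely the claimed value of $\tEv$.

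Alternatively, to lean directly on the distributional conclusion of Lemma~\ref{lem:qt_possion}, one can integrate $t$ against the Erlang density $\lambda^{\qsize} t^{\qsize-1} e^{-\lambda t}/(\qsize-1)!$ and identify the result with a Gamma-function identity; this produces the same answer but is more calculation-heavy and adds nothing conceptually. I expect no real obstacle here: the corollary is a one-line consequence of the fact that $T_{P_\lambda,\qsize}$ is a sum of $\qsize$ mean-$1/\lambda$ exponentials. The only point worth stating carefully is which ingredient is being invoked—either the interarrival decomposition above or the known mean of the Erlang/Gamma family—so that the proof is transparent rather than appealing to an unstated fact.
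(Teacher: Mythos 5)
Your proposal is correct and matches the paper's proof in substance: the paper deduces the corollary from Lemma~\ref{lem:qt_possion} together with the definition of the Erlang distribution (whose mean is $\qsize/\lambda$ precisely because it is a sum of $\qsize$ independent mean-$1/\lambda$ exponentials), and your interarrival-time decomposition plus linearity of expectation is exactly that fact made explicit. The only cosmetic difference is that you unpack the Erlang mean rather than citing it, which adds transparency but no new idea.
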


\begin{proof}
  The statement follows from Lemma~\ref{lem:qt_possion} and the definition of
  the Erlang distribution~\cite[p.~146]{stewart2009ProbabilityMarkov}.
\end{proof}

Figure~\ref{fig:opt_qtime} illustrates the distribution of the optimistic
$\qsize$-quorum time for $\qsize \in \{1,2,16\}$ based on the Poisson process.
In order to compare quorum sizes greater than one to an ideal Bitcoin ($\qsize=1$, $\tEv{}=10$ minutes), we
choose $\lambda = \qsize / 10$.

\begin{figure}
  \centering
  \resizebox{\columnwidth}{!}{\input{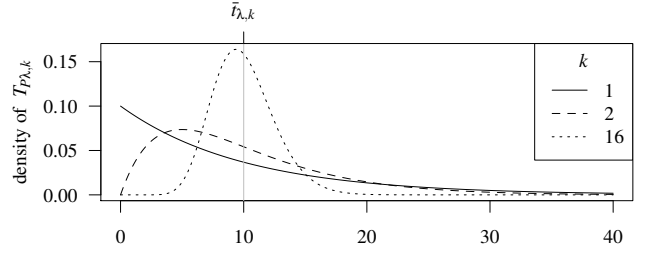}}
  \caption{The density of the distribution of the optimistic $\qsize$-quorum time
    based on $P_\lambda$ with rate $\lambda= \qsize/10$
  (minutes).}
  \label{fig:opt_qtime}
\end{figure}

Figure~\ref{fig:poa_over_time} shows the POA for different quorum sizes as a
function of time. Again, we adjust the rate such that the expected optimistic \qsize{}-quorum time is 10 minutes. Observe that
the POA increases over time as the number of ATVs grows. More
importantly, the POA at the expected optimistic quorum time decreases in the
quorum size~\qsize{}.

\begin{figure}
  \centering
  \resizebox{\columnwidth}{!}{\input{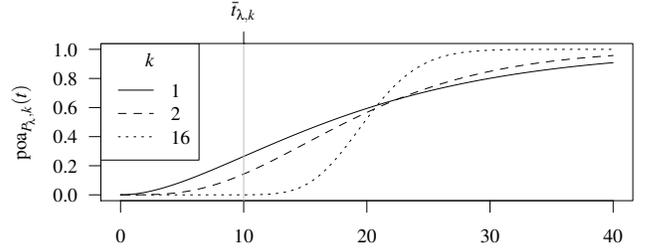}}
  \caption{The probability of ambiguity as a function of time for quorum
    sizes $\qsize = 1, 2, $ and $16$ and $\lambda = \qsize/10$ (minutes).}
  \label{fig:poa_over_time}
\end{figure}

In order to isolate the effect of \qsize{}, we evaluate the POA at fixed time
\tEv{}, which lends itself to a closed form.

\begin{corollary}\label{cor:poa_at_ev}
  For the Poisson process, the POA at expected optimistic \qsize{}-quorum time
  is given by
  \[
    \poa_{P_\lambda, \qsize}(\tEv) =
    1 - e^{-\qsize} \sum_{i=0}^{2\qsize -1}{\frac{\qsize^i}{i!}} \,.
  \]
\end{corollary}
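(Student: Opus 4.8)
The plan is to obtain this directly by substituting the value $t = \tEv$ into the closed-form expression for the probability of ambiguity established in Lemma~\ref{lem:poa_poisson}. That lemma gives, for arbitrary time $t$,
\[
  \poa_{P_\lambda, \qsize}(t) = 1 - e^{-\lambda t} \sum_{i=0}^{2\qsize-1} \frac{(\lambda t)^i}{i!},
\]
so the entire task reduces to evaluating this function at the expected optimistic quorum time rather than at a generic $t$.

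The one computation to carry out is the product $\lambda t$ at $t = \tEv$. By Corollary~\ref{cor:tEv} we have $\tEv = \qsize/\lambda$, hence $\lambda \tEv = \lambda \cdot (\qsize/\lambda) = \qsize$. First I would record this identity explicitly, since it is what makes the rate parameter $\lambda$ drop out of the final expression; the resulting formula depends only on the quorum size $\qsize$, which is exactly the point of evaluating at $\tEv$ (isolating the effect of $\qsize$, as the surrounding text motivates).

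With $\lambda \tEv = \qsize$ in hand, I would substitute into the lemma, replacing every occurrence of $\lambda t$ by $\qsize$, to arrive at
\[
  \poa_{P_\lambda, \qsize}(\tEv) = 1 - e^{-\qsize} \sum_{i=0}^{2\qsize-1} \frac{\qsize^i}{i!},
\]
which is the claimed identity. There is no genuine obstacle here: the statement is a corollary precisely because both ingredients (the POA formula and the mean $\qsize/\lambda$) are already available, and the proof is a one-line substitution. The only thing worth double-checking is that the summation index range $0 \le i \le 2\qsize - 1$ is copied faithfully from Lemma~\ref{lem:poa_poisson}, since that range (reflecting the $2\qsize$ ATVs needed for ambiguity per Definition~\ref{def:poa}) is inherited unchanged.
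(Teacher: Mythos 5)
Your proposal is correct and follows exactly the paper's own proof: the paper likewise obtains the corollary by inserting Corollary~\ref{cor:tEv} (i.e., $\tEv = \qsize/\lambda$, so $\lambda\,\tEv = \qsize$) into the formula of Lemma~\ref{lem:poa_poisson}. The only difference is that you spell out the one-line substitution that the paper leaves implicit.
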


\begin{proof}
  By inserting Corollary~\ref{cor:tEv} into Lemma~\ref{lem:poa_poisson}.
\end{proof}

Observe that the POA at expected optimistic quorum time is independent of
$\lambda$. This is useful as $\lambda$ may measure the total compute capacity
in proof-of-work networks, which is not necessarily known to each agent.

Since ambiguity causes failure, and the probability of ambiguity vanishes as
$\qsize$ grows, $\qsize$ becomes a security parameter.
In order to relate it to other security parameters, such as the key size, we
adopt the common definition of negligibility from cryptography (\ie{}
asymptotic decline faster than any polynomial) and state the following
theorem.

\begin{theorem} \label{thm:negligible}
  For the Poisson process, the probability of ambiguity at the expected quorum
  time is negligible in the quorum size~$\qsize$.
\end{theorem}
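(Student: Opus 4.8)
The plan is to read the closed form of Corollary~\ref{cor:poa_at_ev} as a Poisson tail probability and then control it with a Chernoff bound. By Corollary~\ref{cor:tEv} we have $\lambda\tEv = \qsize$, so $P_\lambda(\tEv)$ is Poisson distributed with mean~$\qsize$. The finite sum $e^{-\qsize}\sum_{i=0}^{2\qsize-1}\qsize^{i}/i!$ appearing in Corollary~\ref{cor:poa_at_ev} is precisely the probability $\prob{P_\lambda(\tEv) \leq 2\qsize - 1}$, whence
\[
  \poa_{P_\lambda,\qsize}(\tEv) = \prob{P_\lambda(\tEv) \geq 2\qsize}.
\]
In words, the POA at the expected quorum time is the probability that a Poisson variable attains at least twice its own mean; intuitively this should be exponentially unlikely, and the task reduces to quantifying it.

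The key step I would carry out is a Chernoff bound for the Poisson distribution. Using the moment generating function $\ev{e^{sP_\lambda(\tEv)}} = \exp(\qsize(e^{s}-1))$ and optimizing the exponent over $s > 0$ gives, for any threshold $a > \qsize$, the standard estimate $\prob{P_\lambda(\tEv) \geq a} \leq e^{-\qsize}(e\qsize/a)^{a}$. Substituting the threshold $a = 2\qsize$ collapses the factors of $\qsize$ and yields
\[
  \poa_{P_\lambda,\qsize}(\tEv) \leq e^{-\qsize}\left(\frac{e}{2}\right)^{2\qsize} = \left(\frac{e}{4}\right)^{\qsize}.
\]
Since $e/4 < 1$, the right-hand side decays exponentially in~$\qsize$.

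It remains to connect exponential decay to the cryptographic notion of negligibility adopted in the text. For any polynomial $p$, the product $p(\qsize)\,(e/4)^{\qsize}$ tends to zero as $\qsize \to \infty$, because exponential decay dominates polynomial growth; hence $\poa_{P_\lambda,\qsize}(\tEv)$ is eventually smaller than $1/p(\qsize)$ for every polynomial~$p$, which is exactly the claim. The only genuinely delicate point is establishing the Poisson tail bound cleanly---recognizing the sum as a cumulative distribution function and fixing the constant $e/4$---while the translation to negligibility and the independence from~$\lambda$ (already noted after Corollary~\ref{cor:poa_at_ev}) are routine.
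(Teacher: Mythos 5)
Your proposal is correct, but it takes a genuinely different route from the paper's own proof. The paper stays with the closed form of Corollary~\ref{cor:poa_at_ev}: it rewrites $f(\qsize)=1-e^{-\qsize}\sum_{i=0}^{2\qsize-1}\qsize^i/i!$ as the regularized incomplete Gamma function $P(2\qsize,\qsize)=\gamma(2\qsize,\qsize)/(2\qsize-1)!$, bounds the numerator by $\gamma(2\qsize,\qsize)\le \qsize^{2\qsize}e^{-\qsize}$ via a monotonicity argument for the integrand on $[0,\qsize]$, bounds the factorial from below with Stirling's approximation, and after some asymptotic bookkeeping concludes $f(\qsize)=O\left(0.85^{\qsize}\right)$. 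You instead recognize the sum as the cumulative distribution function of a $\distPoisson(\qsize)$ variable, so that the POA at $\tEv$ is exactly the upper tail $\prob{P_\lambda(\tEv)\ge 2\qsize}$, and you dispatch that tail with the optimized moment-generating-function (Chernoff) bound, which at threshold $a=2\qsize$ collapses to $\left(e/4\right)^{\qsize}$. Your derivation is shorter and yields a sharper constant ($e/4\approx 0.68<0.85$), and it avoids special functions and Stirling entirely; what the paper's argument buys in exchange is self-containedness at the level of elementary calculus, whereas yours imports the standard Poisson tail inequality as its key tool. Both your Chernoff computation (the optimizer $s^*=\ln(a/\qsize)$ giving $e^{-\qsize}(e\qsize/a)^a$) and the final translation from exponential decay to cryptographic negligibility are correct; the latter step is identical in spirit to the paper's conclusion and, as you say, routine.
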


\begin{proof} See Appendix~\ref{apx:proofs}.\end{proof}

\begin{remark}[Validation on Bitcoin]
For Bitcoin parameters ($\qsize=1, \lambda=0.1$), the POA at \tEv{} is $p
= 0.2642$. This part of the theory can be validated on historical data.  We
estimate the expected block delay by averaging the differences between
consecutive block time stamps over 2017--2018.\footnote{We choose this time range because the block
time stamps were less accurate in the more distant past as the data field
was used for other purposes.} The estimated average block delay is $\hat{t}=9.52$ minutes.  The
ratio of cases with more than two blocks arriving within $\hat{t}$ is $\hat{p} = 0.2606$. 
This estimate should be slightly below $p$ because our historic data does not contain
orphaned blocks. Since $p \approx \hat{p}$, we conclude that the theory applies to Bitcoin.
\end{remark}

The implication of this theory for protocol design is that larger quorums
reduce the probability of ambiguity. 
The (close to) exponential decay makes it conceivable to choose parameters such that quorums are practically unique. This allows us to use a notion of quorum uniqueness with ephemeral identities generated by proof-of-work.

\section{\theProtocol{}} \label{sec:protocol}

\newcommand{\loghash}{\ensuremath{\mathcal{H}_{\text{list}}}}
\newcommand{\powhash}{\ensuremath{\mathcal{H}_{\text{pow}}}}
\newcommand{\initial}{\ensuremath{S_0}}
\newcommand{\opRef}{\ensuremath{r}}
\newcommand{\opId}{\ensuremath{p}}
\newcommand{\opSol}{\ensuremath{s}}
\newcommand{\votenp}{\ensuremath{\opRef,\opId,\opSol}}
\newcommand{\vote}{\ensuremath{(\votenp)}}

Now we specify \theProtocol{}, a distributed log protocol secured by a
proof-of-work process (Def.~\ref{def:process}) and $\qsize$-quorums
(Def.~\ref{def:quorum}).

We present \theProtocol{} using pseudo\-code and a mixture of event-driven and
imperative programming. %
A less ambiguous implementation in OCaml is provided online.\repofootnote

\subsection{Prerequisites} \label{ssec:proto_prerequisites}

We assume interfaces to the network and application layers (Fig.~\ref{fig:appstack}), and the availability of cryptographic primitives.

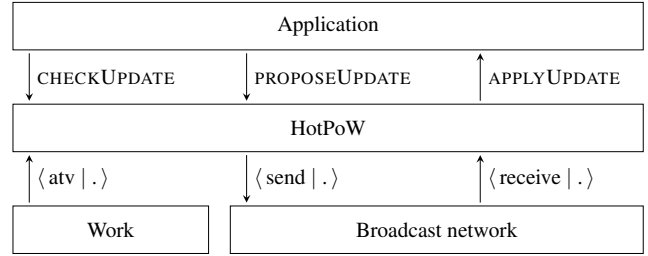
\begin{figure}
  \centering
  \resizebox{\linewidth}{!}{
    \small
    \begin{tikzpicture}[>=stealth, x=1.6cm, y=1cm]
      \draw (0.1,0.15) rectangle (1.9,0.85);
      \node at (1.00, 0.5) {Work};

      \draw (2.1,0.15) rectangle (5.9,0.85);
      \node at (4.00, 0.5) {Broadcast network};

      \draw (0.1,1.65) rectangle (5.9,2.35);
      \node at (3, 2) {\theProtocol{}};

      \draw (0.1,3.15) rectangle (5.9,3.85);
      \node at (3, 3.5) {Application};

      \path [<-] (0.25,2.4) edge node[anchor=west] {\Call{checkUpdate}{}}   +(0, 0.7);
      \path [<-] (2.25,2.4) edge node[anchor=west] {\Call{proposeUpdate}{}} +(0, 0.7);
      \path [->] (4.40,2.4) edge node[anchor=west] {\Call{applyUpdate}{}}   +(0, 0.7);
      \path [->] (0.25,0.9) edge node[anchor=west] {\event{atv $\mid$ .}}     +(0, 0.7);
      \path [<-] (2.25,0.9) edge node[anchor=west] {\event{send $\mid$ .}}    +(0, 0.7);
      \path [->] (4.40,0.9) edge node[anchor=west] {\event{receive $\mid$ .}} +(0, 0.7);

  \end{tikzpicture}}
  \caption{Interaction between the protocol (\theProtocol{}), the application, the
    proof-of-work process, and the network. Arrows denote information flows and
  not necessarily call directions.}
  \label{fig:appstack}
\end{figure}

\subsubsection{Broadcast Network} \label{sssec:method_net}
The proposed protocol requires a (potentially unreliable) network broadcast.
We abstract from the exact implementation and assume that scheduling an event 
\event{send $\mid$ $m$}  results in the message $m$ being sent to (most
of) the other \theProtocol{} nodes. On the receiving side, the implementation delivers
message $m'$ by scheduling \event{receive $\mid$ $m'$}.

\subsubsection{Application} \label{sssec:method_app}
\theProtocol{} implements a distributed log which may serve as a base for different
applications~\cite{lamport1978TimeClocks, schneider1990ImplementingFaulttolerant,
abraham2017BlockchainConsensus}.
For example, a simple cryptocurrency could append lists of transactions which jointly form a ledger.
More advanced applications could add scalability layers that only
record key decisions in the distributed log while handling other state updates
separately \cite{eyal2016BitcoinNGScalable,kogias2016EnhancingBitcoin,pass2018ThunderellaBlockchains}.

We abstract from the application logic using three procedures \theProtocol{} can call.
\Call{checkUpdate}{} takes an application state and a state update
    as arguments and returns true if the state update is valid.
\Call{applyUpdate}{} takes an application state and a state
    update and returns an updated state.
\Call{proposeUpdate}{} takes an application state  and
    returns a valid state update.
We are agnostic about direct access of the application to the broadcast network. 
For example, cryptocurrencies share transactions provisionally before they are logged in blocks.

\subsubsection{Cryptography} \label{ssec:method_dsa} \label{ssec:method_hash}
\theProtocol{} uses cryptographic hash functions for the hash-linked list and the
proof-of-work process. We separate these two concerns and use two different
hash functions, \loghash{} and \powhash{}.  While it is sufficient that
\loghash{} is cryptographically secure, \theProtocol{} requires the same stronger assumptions for \powhash{} as Bitcoin~\cite{abraham2017BlockchainConsensus}. 
Since this difference is not central, the reader can safely assume $\loghash{} = \powhash{} = \operatorname{SHA3}$.

\theProtocol{} also requires a digital signature scheme~\cite[Def.~12.1,
p.~442]{katz2014IntroductionModern}. We assume a secure implementation is given
by the three procedures \Call{generateKeyPair}{}, \Call{checkSignature}{}, and
\Call{sign}{}.  Every node holds an asymmetric key pair (me, secret).

\subsection{Protocol}

\subsubsection{Local Block Store} \label{sssec:proto_global}

\theProtocol{} nodes maintain a local tree of hash-linked blocks and a reference to the preferred chain
(head). They store blocks together with the associated application state, the
block height, and a set of corresponding votes (see Listing~\ref{lst:store}). The
block storage is indexed by \loghash{}.

\begin{listing}[H]
  \caption{Local Block Store}
  \label{lst:store}
  \begin{algorithmic}[1]
    \Procedure{store}{block B}
    \State h $\gets \loghash(\text{B})$
    \If {h $\not\in$ blocks}
    \State parent $\gets$ blocks[B.parent]
    \State blocks[h].parent $\gets$ parent
    \State blocks[h].state $\gets$ \Call{applyUpdate}{parent.state, B.payload}
    \State blocks[h].height $\gets$ parent.height + 1
    \State blocks[h].votes $\gets \emptyset$
    \State blocks[h].block $\gets$ B
    \State \Call{updateHead}{h}
    \EndIf
    \EndProcedure
    \algstore{hotpow}
  \end{algorithmic}
\end{listing}

\subsubsection{Votes} \label{sssec:proto_vote}
As mentioned in Section~\ref{sec:intuition}, a vote in \theProtocol{} is a
triple \vote{}, where \opRef{} is a reference to a previous block, \opId{} is
the public key of the voter, and \opSol{} is a puzzle solution.
A vote \vote{} is valid if $\powhash\vote\leq \vthres$, where \vthres{}
denotes the proof-of-work threshold and represents \theProtocol{}'s difficulty
parameter.
\theProtocol{} nodes maintain a set of valid votes for each block.
The procedure \Call{collect}{} (Listing~\ref{lst:collect}) adds a valid
vote \vote{} to the block referenced by \opRef{} and, if necessary, updates the
preferred chain (see Sect.~\ref{sssec:proto_preference} below).

\begin{listing}[H]
  \caption{Collection of Votes}
  \label{lst:collect}
  \begin{algorithmic}[1]
    \algrestore{hotpow}
    \Procedure{collect}{\votenp{}}
      \If {$\powhash\vote \leq \vthres$}
        \State blocks[\opRef{}].votes $\gets$ blocks[\opRef{}].votes $\cup \{\text{(\opId{}, \opSol{})}\}$
        \State \Call{updateHead}{r}
      \EndIf
    \EndProcedure
    \algstore{hotpow}
  \end{algorithmic}
\end{listing}

\subsubsection{Quorums} \label{sssec:proto_quorum}
As defined in Section~\ref{sec:pow_quorum}, a \qsize{}-quorum is a set
of~\qsize{} votes for the same reference. We represent such quorums as lists.
Since the reference is the same for all votes, we omit it from the list. %
A list $L = \{(\opId_i, \opSol_i)\}$ represents a valid \qsize{}-quorum
for $\opRef$, if the following conditions hold:
\begin{enumerate}
  \item \label{qcond_size}$|L| = \qsize$
  \item \label{qcond_threshold} $\forall\, 1 \leq i \leq \qsize \colon {\powhash(\opRef, \opId_i, \opSol_i)} \leq \vthres$
  \item \label{qcond_order} $\forall\, 1 \leq i < \qsize \colon \powhash(\opRef, \opId_i, \opSol_i) \leq \powhash(\opRef, \opId_{i+1}, \opSol_{i+1})$
\end{enumerate}
The first condition enforces the quorum size.
The second condition ensures that all votes are valid.
The third condition imposes a canonical order which we use for leader election.
We intentionally allow single nodes providing multiple votes. Sibyl attacks are
mitigated by the scarcity of votes.

\subsubsection{Leader Election} \label{sssec:proto_leader}

A quorum can only be formed at optimistic quorum time (Def.~\ref{def:oqt})
if all nodes vote for the same block. We facilitate coordination by electing
a leader who is responsible for proposing a new block. This election is based
on the proof-of-work quorum: the leader is identified by the smallest vote.
According to Section~\ref{sssec:proto_quorum} Condition~\ref{qcond_order}, this
vote is also the first element of the quorum.
Leaders  authenticate their proposals for the next block using \Call{sign}{}
and their private key.
Everyone verifies proposals with the first public key in the quorum.

\subsubsection{Blockchain} \label{sssec:proto_block}

The global data structure of the protocol is a hash-linked list of blocks.
Each block consists of a hash reference to its predecessor (parent), a
proof-of-work quorum for this predecessor, a payload, and a proof of leadership
(signature).
The references to parent blocks are established by the collision-resistant hash
function \loghash{}.
The payload is a state update to the application
implemented on top of the distributed log (see Sect.~\ref{sssec:method_app}).

With quorums, leader election, and state updates defined, we are in the
position to present \theProtocol's block validity rule in
Listing~\ref{lst:valid_block}.
The loop iterates over the quorum, counts the votes, verifies them, and checks
their canonical order.
The boolean conjunction in line~\ref{l:threecond} verifies the remaining
condition of the quorum, leadership, and the validity of the proposed state
update.

\begin{listing}[H]
  \caption{Block Validity}
  \label{lst:valid_block}
  \begin{algorithmic}
    \algrestore{hotpow}
    \Procedure{validBlock}{block B}
      \State $(c,h)  \gets (0,0)$
      \ForAll{$(\opId, \opSol)$ \textbf{in} B.quorum}
        \State $h' \gets \powhash(\text{B.parent}, \opId, \opSol)$
        \label{l:predecessor}
        \Comment{predecessor!}
        \If{$h' > \vthres$} \Return false
        \Comment{quorum condition~\ref{qcond_threshold}, Sect.~\ref{sssec:proto_quorum}}
        \EndIf
        \If{$h' < h$} \Return false
        \Comment{quorum condition~\ref{qcond_order}, Sect.~\ref{sssec:proto_quorum}}
        \EndIf
        \State $(c,h)  \gets (c + 1, h')$
      \EndFor
      \State \Return \label{l:threecond}
      \Statex \hskip\algorithmicindent \hskip\algorithmicindent
        $c = \qsize$ $\wedge$
        \Comment{quorum condition~\ref{qcond_size}, Sect.~\ref{sssec:proto_quorum}}
      \Statex \hskip\algorithmicindent \hskip\algorithmicindent
        \Call{checkSignature}{B.quorum.[0].\opId{}, B} $\wedge$
      \Statex \hskip\algorithmicindent \hskip\algorithmicindent
        \Call{checkUpdate}{blocks[B.parent].state, B.payload}
    \EndProcedure
    \algstore{hotpow}
  \end{algorithmic}
\end{listing}

A key difference to \nc{} is that the proof-of-work solutions in the quorum are
bound to the previous block and not to the state update of the proposed block
(see line~\ref{l:predecessor}).
This implements the separation of puzzle solutions from block proposals and
enables parallel puzzle solving (see Sect.~\ref{sec:intuition}).

\subsubsection{Proposing} \label{sssec:proto_propose}
Nodes assume leadership whenever possible. If so, the procedure
\Call{proposeIfLeader}{} (Listing~\ref{lst:propose}) obtains a state update
from the application, integrates it into a new valid block, and shares it with
the other nodes.

\begin{listing}[H]
  \caption{Block Proposals}
  \label{lst:propose}
  \begin{algorithmic}
    \algrestore{hotpow}
    \Procedure{proposeIfLeader}{\opRef{}}
      \If{$\exists$ valid \qsize{}-quorum $Q \subset$ blocks[$r$].votes \textbf{where} $Q$[0].$p$ = me}
        \State B.parent $\gets r$
        \State B.quorum $\gets Q$
        \State B.payload $\gets$ \Call{proposeUpdate}{blocks[\opRef{}].state}
        \State B.signature $\gets$ \Call{sign}{(B.parent, B.quorum, B.payload), secret}
        \State \Call{store}{B}
        \State \Schedule{send $\mid$ block B} \label{l:sendblock}
        \State \Return true
      \Else ~\Return false \EndIf
    \EndProcedure
    \algstore{hotpow}
  \end{algorithmic}
\end{listing}

\subsubsection{Commit} \label{sssec:proto_commit}
Proposals become final after the three-phase commit. Each subsequent block
carries a quorum that completes one phase, like in HotStuff (see
Sect.~\ref{sec:intuition}).
Consequently, the most recent application state can be retrieved from the local
block store as shown in Listing~\ref{lst:read_state}.

\begin{listing}[H]
  \caption{Reading Application State}
  \label{lst:read_state}
  \begin{algorithmic}
    \algrestore{hotpow}
    \Procedure{readState}{}
      \State \Return blocks[head].parent.parent.parent.state
    \EndProcedure
    \algstore{hotpow}
  \end{algorithmic}
\end{listing}

\subsubsection{Conflict Resolution} \label{sssec:proto_progress}
The commit becomes effective after three blocks, but we have to consider
conflicting block proposals at the uncommitted frontier.
For example, when more than $\qsize$~votes exist, the leader election is not
unique.
Moreover, a malicious leader can send different proposals without solving
additional proof-of-work puzzles.
Nodes resolve such conflicts based on the progress towards the \emph{next}
quorum.

\subsubsection{Block Preference} \label{sssec:proto_preference}
When learning of a new block or vote, nodes update their preferred chain
according to a modified version of Nakamoto's longest chain rule.
\theProtocol{} adapts it to include information on quorum progress
(Sect.~\ref{sssec:proto_progress}) and reject changes to already committed
state (Sect.~\ref{sssec:proto_commit}).
Procedure \Call{updateHead}{} (Listing~\ref{lst:update_head}) takes a candidate
block reference and updates the preferred chain if necessary.

\begin{listing}[H]
  \caption{Block Preference}
  \label{lst:update_head}
  \begin{algorithmic}
    \algrestore{hotpow}
    \Procedure{updateHead}{\opRef{}}
      \State H $\gets$ blocks[head]
      \State R $\gets$ blocks[\opRef{}]
      \State $d \gets \text{R.height} - \text{H.height}$
      \If{$d > 0 \vee (d = 0$ $\wedge$ $|\text{R.votes}| > |\text{H.votes}|)$}
        \While {$d > 0$} $($R, $d) \gets ($R.parent, $d - 1)$ \EndWhile
        \If {H.parent.parent.parent.block = R.parent.parent.parent.block}
        \State head $\gets \opRef$ \EndIf \label{l:detectfork}
      \EndIf
    \EndProcedure
    \algstore{hotpow}
  \end{algorithmic}
\end{listing}

\subsubsection{Main Program}
Listing~\ref{lst:hotpow} shows the set of event handlers that tie everything together and define a \theProtocol{}
node. The execution is initiated by scheduling the \event{init} event. The
listing shows how nodes assume leadership upon completing a suitable quorum
with an ATV of their own (line~\ref{l:leadershipa}), or votes received from
others, either directly (line~\ref{l:leadershipb}) or as part of a block
proposal (line~\ref{l:leadershipc}). In the last case, if more than $\qsize$
votes exist, it can happen that a node replaces the leader. It proposes a block
of its own by reusing votes contained in the received proposal. This is
possible because votes in \theProtocol{} reference the previous block and not
the current proposal. The possibility of reusing votes reduces wasted work
compared to orphans in \nc{}, a problem that has been studied
separately~\cite{sompolinsky2015SecureHighRate}.
It also provides robustness against leader failure (see Sect.~\ref{sssec:leader-failure}).

Line~\ref{l:onatv} handles ATVs.
If the node cannot lead a quorum, it broadcasts the vote.
The last missing part is how ATVs can be scheduled, which we discuss next.

\begin{listing}[H]
  \caption{The \theProtocol{} Protocol}
  \label{lst:hotpow}
  \begin{algorithmic}
    \algrestore{hotpow}
    \Event{init}
      \State me, secret $\gets$ \Call{generateKeyPair}{\,}
      \State head $\gets$ genesis \Comment{hard-coded magic value}
      \State blocks[genesis].state $\gets$ \initial
        \Comment{application's initial state}
      \State blocks[genesis].height $\gets$ 0
    \EndEvent

    \Event{atv $\mid$ \opSol{}} \label{l:onatv}
    \State \Call{collect}{head, me, \opSol{}}
      \If {\textbf{not} \Call{proposeIfLeader}{head}} \label{l:leadershipa}
        \State \Schedule{send $\mid$ vote (head, me, \opSol{})} \label{l:sendvote}
      \EndIf
    \EndEvent

    \Event{receive $\mid$ vote \vote{}} \Comment{sent by other node in line~\ref{l:sendvote}}
      \State \Call{collect}{\votenp{}}
      \State \Call{proposeIfLeader}{\opRef{}} \label{l:leadershipb}
    \EndEvent

    \Event{receive $\mid$ block B}  \Comment{sent by other node in line~\ref{l:sendblock} (Sect.~\ref{sssec:proto_propose})}
    \ForAll{$(\opId, \opSol)$ \textbf{in} B.quorum}
      \State \Call{collect}{B.parent, \opId{}, \opSol{}}
      \EndFor
      \State \Call{proposeIfLeader}{B.parent} \label{l:leadershipc}
      \If {\Call{validBlock}{B}} \Call{store}{B} \EndIf
    \EndEvent
    \algstore{hotpow}
  \end{algorithmic}
\end{listing}

\subsubsection{Work} \label{sssec:proto_work}

Agents can participate in the quorum finding process by computing ATVs on their
nodes. For completeness, Listing~\ref{lst:work} shows the trial-and-error
algorithm which schedules solutions suitable for votes ($\leq\vthres$).
Alternatively, agents can search ATVs with the help of other machines, possibly
in parallel and using specialized hardware. Figure~\ref{fig:appstack} reflects
this by splitting the lower layer in network and work.

\begin{listing}
  \caption{Puzzle Solving}
  \label{lst:work}
  \begin{algorithmic}
    \algrestore{hotpow}
    \Procedure{work}{}
      \State draw random number $n$
      \If {$\powhash(\text{head, me, }n) \leq \vthres$}
        \State \Schedule{atv $\mid$ n}
      \EndIf
      \State \Call{work}{}
    \EndProcedure
  \end{algorithmic}
\end{listing}

Figure~\ref{fig:timeline} in the appendix visualizes an execution of \theProtocol{} by correct nodes and compares it to \nc{}.

\subsection{Incentives} \label{ssec:incentives}

It is possible to motivate participation in \theProtocol{} by rewarding puzzle solutions. 
This requires some kind of virtual asset that (at least partly) fulfills the functions of
money~\cite[p.~1]{hicks1967CriticalEssays} and can be transferred to a vote's public key. Claiming the reward for $(\opRef,
\opId, \opSol)$ depends on the corresponding secret key.  %

\theProtocol{} could adopt Bobtails's constant reward per vote \cite{bissias2020BobtailImproved}.
Rewarding votes instead of blocks would ensure inclusiveness without compromising security (see Sect.~\ref{sec:intuition}).
Votes occur $\qsize$ times more frequently than blocks. \theProtocol's mining income would thus be less volatile than in \nc. This reduces the pressure to form mining pools. %

However, it is not trivial to establish if constant rewards are incentive compatible because the utility of the reward
\emph{outside} the system may affect the willingness to participate \emph{in} the system and
thereby make $\lambda$ endogenous~\cite{dimitri2017BitcoinMining,
prat2018EquilibriumModel}.  This implies that rewards must be treated
jointly with the assumptions preventing the failure modes
\ref{pow-network} and \ref{pow-adversary}. We are unaware of
protocol analyses that solve this problem convincingly.

On a more general note, designing protocols like economic mechanisms by
incentivizing desired behavior sounds attractive because there is some hope
that the assumption of honest nodes can be replaced by a somewhat weaker
assumption of rational agents~\cite{garay2013RationalProtocol,groce2012ByzantineAgreement}.
In this spirit, \citet{badertscher2018WhyDoes} present positive results for
Bitcoin in a discrete round execution model and under assumption of a constant
exchange rate.
However, many roadblocks remain. Agents' actions are not fully observable
(\eg{} information withholding) and preference orders are not fully knowable,
hence rationality is not precisely defined.  Side-payments (bribes), which
cannot be ruled out, pose an insurmountable challenge for mechanism
design~\cite{bonneau2016WhyBuy, judmayer2017MergedMining,
budish2018EconomicLimits}.  For distributed logs, which work inherently
sequential, this approach may even be thwarted by negative results
on the existence of unique equilibria in repeated
games~\cite{friedman1971NoncooperativeEquilibrium}.
For these reasons, we
skip the mechanism design aspects and limit our
contribution to transferring Byzantine consensus to proof-of-work scenarios. In
other words, \theProtocol{} supports incentives for inclusiveness, %
but its security intentionally does not rely on incentives.

\section{Evaluation} \label{sec:evaluation}

\newcommand{\simNNodes}{1000}
\newcommand{\simNBlocks}{500}
\newcommand{\simNRuns}{100}

We implement \theProtocol{} in OCaml and evaluate it in a network
of~\simNNodes{} nodes using a discrete event simulation.  We average
over~\simNRuns{} independent executions of the first~\simNBlocks{} blocks. All
results are reproducible with the code provided online.\repofootnote{}

\begin{figure}
  \centering
  \resizebox{\columnwidth}{!}{
    \begin{tikzpicture}[x=2.60cm, y=0.7cm, >=stealth]
      \node at (0.25, -0.5) {priority queue};
      \node at (0.25, -1.0) {\scriptsize $t_i \leq t_{i+1}$};
      \draw (-0.5,0) rectangle (1,1);
      \draw (-0.5,1) rectangle (1,2);
      \draw (-0.5,2) rectangle (1,3);
      \draw (-0.5,3) rectangle (1,4);
      \draw [dotted] (-0.5,4) -- (-0.5,5);
      \draw [dotted] (1,4) -- (1,5);
      \draw (-0.5,5) rectangle (1,6);
      \draw (-0.5,6) rectangle (1,7);
      \draw [dotted] (-0.5,7) -- (-0.5,8);
      \draw [dotted] (1,7) -- (1,8);
      \node [anchor=west] at (-0.5, 0.5) {$t_0$};
      \node [anchor=west] at (-0.5, 1.5) {$t_1$};
      \node [anchor=west] at (-0.5, 2.5) {$t_2$};
      \node [anchor=west] at (-0.5, 3.5) {$t_3$};
      \node [anchor=west] at (-0.5, 5.5) {$t_{n+1}$};
      \node [anchor=west] at (-0.5, 6.5) {$t_{n+2}$};
      \node [anchor=west] at (-0.5 + 0.15, 0.5) {\event{atv}};
      \node [anchor=west] at (-0.5 + 0.15, 1.5) {\event{broadcast $\mid m$}};
      \node [anchor=west] at (-0.5 + 0.15, 2.5) {\event{deliver $\mid m, n-1$}};
      \node [anchor=west] at (-0.5 + 0.15, 3.5) {\event{deliver $\mid m, n-2$}};
      \node [anchor=west] at (-0.5 + 0.33, 5.5) {\event{deliver $\mid m, 0$}};
      \node [anchor=west] at (-0.5 + 0.33, 6.5) {\event{atv}};

      \draw (1.2, 0) rectangle (2.8, 8);
      \node at (2, -0.5) {simulation};

      \draw [fill=white,dashed] (3.0666, 0.2) rectangle (4.0666, 3.2);
      \draw [fill=white] (3.0333, 0.1) rectangle (4.0333, 3.1);
      \draw [fill=white] (3, 0) rectangle (4, 3.0);
      \node at (3.5, -0.5) {honest nodes};
      \node [anchor=west] at (3.07, 0.5) {\event{atv}};
      \node [anchor=west] at (3.07, 2.5) {\event{receive $\mid m$}};
      \node [anchor=west] at (3.07, 1.5) {\event{send $\mid m$}};

      \draw (3, 5) rectangle (4.00, 8);
      \node at (3.45, 4.5) {attacker node};
      \node [anchor=west] at (3.07, 6.5) {\event{atv}};
      \node [anchor=west] at (3.07, 5.5) {\event{receive $\mid m$}};
      \node [anchor=west] at (3.07, 7.5) {\event{send $\mid m'$}};

      \path [->] (1,0.5) edge node[above] {random assignment} (3.09, 0.5);
      \path [->] (3.07,1.5) edge node[above] {} (1, 1.5);
      \path [->] (1,2.5) edge (3.09, 2.5);
      \path [->] (1,5.5) edge (3.09, 5.5);
      \path [->] (1,6.5) edge node[above] {random assignment} (3.09, 6.5);

      \path (1,1.75) edge (1.4, 1.75);
      \path (1.4,1.75) edge (1.4, 5.75);
      \path [<-] (1,2.75) edge (1.4, 2.75);
      \path [<-] (1,3.75) edge (1.4, 3.75);
      \path [<-] (1,5.75) edge (1.4, 5.75);
      \node [anchor=west, align=left] at (1.43, 4) {model latency,\\churn, and failure};

      \draw[->] (1,-2) -- +(1,0) node [anchor=west] {$\quad$schedule event};

    \end{tikzpicture}
  }
  \caption{Schematic overview of the discrete event simulation.}
  \label{fig:simulator}
\end{figure}
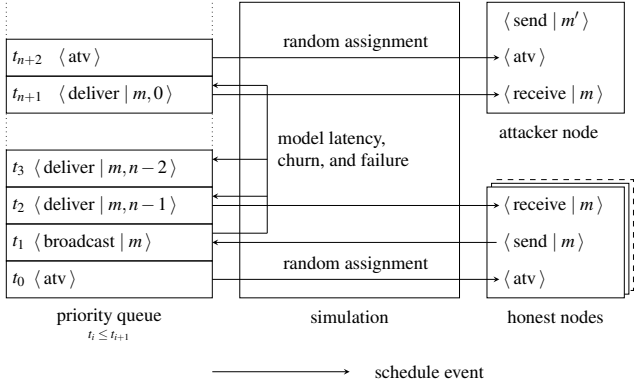

The simulation maintains state for all simulated nodes separately.  Events are
stored in a priority queue, with keys representing points in time.  Events are
scheduled by inserting them into the queue. There are three types of simulation
events: \event{atv}, \event{broadcast} and \event{deliver}. The simulation's
main loop takes the first event from the queue and handles it by interacting
with the nodes in the following way (also see Fig.~\ref{fig:simulator}).

\paragraph{Proof-of-Work} When taking an \event{atv} event from the queue, the
simulation randomly and independently assigns an ATV to a node. The simulation
executes the assignment by invoking the \event{atv} event handler on the
receiving node.  Then, it schedules the next ATV with a random, exponentially
distributed time delta.  This simulates a proof-of-work process according to
Def.~\ref{def:process}.
The simulation does not perform actual work by setting the vote threshold
\vthres{} to the maximum; meaning puzzles are trivial to solve.

\paragraph{Broadcast} Nodes invoke the broadcast logic by scheduling local
\event{send~$\mid$~.} events. The simulation translates them to global
\event{broadcast~$\mid$~.} events. For each broadcast event, the simulation
schedules \event{deliver~$\mid$~.} events for each node except the sender.
During this step, the simulation injects latency and simulates churn and leader
failure. Delivery events are handled by invoking the \event{receive~$\mid$~.}
handler on the receiving node.

\subsection{Robustness} \label{ssec:robustness}

We evaluate the robustness in terms of latency, churn, and leader failure. In all simulation runs we check for inconsistent
committed state, which did not occur.

\begin{figure}
  \resizebox{\columnwidth}{!}{
\begin{tikzpicture}[x=1pt,y=1pt]
\definecolor{fillColor}{RGB}{255,255,255}
\path[use as bounding box,fill=fillColor,fill opacity=0.00] (0,0) rectangle (267.40,137.31);
\begin{scope}
\path[clip] (  0.00,  0.00) rectangle (267.40,137.31);
\definecolor{drawColor}{RGB}{0,0,0}

\path[draw=drawColor,line width= 0.4pt,line join=round,line cap=round] ( 45.60, 35.07) -- ( 45.60,131.66);

\path[draw=drawColor,line width= 0.4pt,line join=round,line cap=round] ( 45.60, 35.07) -- ( 39.60, 35.07);

\path[draw=drawColor,line width= 0.4pt,line join=round,line cap=round] ( 45.60, 59.22) -- ( 39.60, 59.22);

\path[draw=drawColor,line width= 0.4pt,line join=round,line cap=round] ( 45.60, 83.36) -- ( 39.60, 83.36);

\path[draw=drawColor,line width= 0.4pt,line join=round,line cap=round] ( 45.60,107.51) -- ( 39.60,107.51);

\path[draw=drawColor,line width= 0.4pt,line join=round,line cap=round] ( 45.60,131.66) -- ( 39.60,131.66);

\node[text=drawColor,anchor=base east,inner sep=0pt, outer sep=0pt, scale=  1.00] at ( 33.60, 31.62) {1.00};

\node[text=drawColor,anchor=base east,inner sep=0pt, outer sep=0pt, scale=  1.00] at ( 33.60, 55.77) {1.05};

\node[text=drawColor,anchor=base east,inner sep=0pt, outer sep=0pt, scale=  1.00] at ( 33.60, 79.92) {1.10};

\node[text=drawColor,anchor=base east,inner sep=0pt, outer sep=0pt, scale=  1.00] at ( 33.60,104.07) {1.15};

\node[text=drawColor,anchor=base east,inner sep=0pt, outer sep=0pt, scale=  1.00] at ( 33.60,128.22) {1.20};

\path[draw=drawColor,line width= 0.4pt,line join=round,line cap=round] ( 45.60, 31.20) --
	(266.20, 31.20) --
	(266.20,136.11) --
	( 45.60,136.11) --
	( 45.60, 31.20);
\end{scope}
\begin{scope}
\path[clip] (  0.00,  0.00) rectangle (267.40,137.31);
\definecolor{drawColor}{RGB}{0,0,0}

\node[text=drawColor,rotate= 90.00,anchor=base,inner sep=0pt, outer sep=0pt, scale=  1.00] at (  7.20, 83.66) {time to commit};

\node[text=drawColor,anchor=base,inner sep=0pt, outer sep=0pt, scale=  1.00] at (155.90,  3.60) {latency};
\end{scope}
\begin{scope}
\path[clip] (  0.00,  0.00) rectangle (267.40,137.31);
\definecolor{drawColor}{RGB}{0,0,0}

\path[draw=drawColor,line width= 0.4pt,line join=round,line cap=round] ( 45.60, 31.20) -- (266.20, 31.20);

\path[draw=drawColor,line width= 0.4pt,line join=round,line cap=round] ( 53.77, 31.20) -- ( 53.77, 25.20);

\path[draw=drawColor,line width= 0.4pt,line join=round,line cap=round] (121.86, 31.20) -- (121.86, 25.20);

\path[draw=drawColor,line width= 0.4pt,line join=round,line cap=round] (189.94, 31.20) -- (189.94, 25.20);

\path[draw=drawColor,line width= 0.4pt,line join=round,line cap=round] (258.03, 31.20) -- (258.03, 25.20);

\node[text=drawColor,anchor=base,inner sep=0pt, outer sep=0pt, scale=  1.00] at ( 53.77,  9.60) {$10^{- 4 }$};

\node[text=drawColor,anchor=base,inner sep=0pt, outer sep=0pt, scale=  1.00] at (121.86,  9.60) {$10^{- 3 }$};

\node[text=drawColor,anchor=base,inner sep=0pt, outer sep=0pt, scale=  1.00] at (189.94,  9.60) {$10^{- 2 }$};

\node[text=drawColor,anchor=base,inner sep=0pt, outer sep=0pt, scale=  1.00] at (258.03,  9.60) {$10^{- 1 }$};

\path[draw=drawColor,line width= 0.4pt,line join=round,line cap=round] ( 45.60, 31.20) -- (266.20, 31.20);

\path[draw=drawColor,line width= 0.4pt,line join=round,line cap=round] ( 53.77, 31.20) -- ( 53.77, 25.20);

\path[draw=drawColor,line width= 0.4pt,line join=round,line cap=round] (258.03, 31.20) -- (258.03, 25.20);

\node[text=drawColor,anchor=base,inner sep=0pt, outer sep=0pt, scale=  1.00] at ( 53.77,  1.20) {\scriptsize 60ms};

\node[text=drawColor,anchor=base,inner sep=0pt, outer sep=0pt, scale=  1.00] at (258.03,  1.20) {\scriptsize 1'};
\end{scope}
\begin{scope}
\path[clip] ( 45.60, 31.20) rectangle (266.20,136.11);
\definecolor{drawColor}{RGB}{0,0,0}

\path[draw=drawColor,line width= 0.4pt,line join=round,line cap=round] ( 53.77, 37.28) --
	( 87.79, 36.44) --
	(121.86, 35.90) --
	(155.88, 39.19) --
	(189.94, 44.19) --
	(223.96, 59.60) --
	(258.03,111.61);

\path[draw=drawColor,line width= 0.4pt,dash pattern=on 4pt off 4pt ,line join=round,line cap=round] ( 53.77, 35.24) --
	( 87.79, 35.09) --
	(121.86, 36.17) --
	(155.88, 36.74) --
	(189.94, 46.19) --
	(223.96, 64.89) --
	(258.03,127.35);

\path[draw=drawColor,line width= 0.4pt,dash pattern=on 1pt off 3pt ,line join=round,line cap=round] ( 53.77, 35.60) --
	( 87.79, 35.47) --
	(121.86, 35.96) --
	(155.88, 38.24) --
	(189.94, 44.88) --
	(223.96, 65.01) --
	(258.03,131.65);

\path[draw=drawColor,line width= 0.4pt,dash pattern=on 1pt off 3pt on 4pt off 3pt ,line join=round,line cap=round] ( 53.77, 35.09) --
	( 87.79, 35.27) --
	(121.86, 35.80) --
	(155.88, 37.93) --
	(189.94, 45.00) --
	(223.96, 65.73) --
	(258.03,132.23);

\path[draw=drawColor,line width= 0.4pt,line join=round,line cap=round] (205.05, 42.81) --
	( 68.88, 42.81);

\path[draw=drawColor,line width= 0.4pt,dash pattern=on 4pt off 4pt ,line join=round,line cap=round] (205.05, 44.31) --
	( 68.88, 44.31);

\path[draw=drawColor,line width= 0.4pt,dash pattern=on 1pt off 3pt ,line join=round,line cap=round] (205.05, 43.33) --
	( 68.88, 43.33);

\path[draw=drawColor,line width= 0.4pt,dash pattern=on 1pt off 3pt on 4pt off 3pt ,line join=round,line cap=round] (205.05, 43.74) --
	( 68.88, 43.74);

\path[draw=drawColor,line width= 0.4pt,line join=round,line cap=round] ( 45.60,136.11) rectangle (101.10, 64.11);

\path[draw=drawColor,line width= 0.4pt,line join=round,line cap=round] ( 54.60,112.11) -- ( 72.60,112.11);

\path[draw=drawColor,line width= 0.4pt,dash pattern=on 4pt off 4pt ,line join=round,line cap=round] ( 54.60,100.11) -- ( 72.60,100.11);

\path[draw=drawColor,line width= 0.4pt,dash pattern=on 1pt off 3pt ,line join=round,line cap=round] ( 54.60, 88.11) -- ( 72.60, 88.11);

\path[draw=drawColor,line width= 0.4pt,dash pattern=on 1pt off 3pt on 4pt off 3pt ,line join=round,line cap=round] ( 54.60, 76.11) -- ( 72.60, 76.11);

\node[text=drawColor,anchor=base,inner sep=0pt, outer sep=0pt, scale=  1.00] at ( 73.35,124.11) {$k$};

\node[text=drawColor,anchor=base west,inner sep=0pt, outer sep=0pt, scale=  1.00] at ( 81.60,108.67) {2};

\node[text=drawColor,anchor=base west,inner sep=0pt, outer sep=0pt, scale=  1.00] at ( 81.60, 96.67) {8};

\node[text=drawColor,anchor=base west,inner sep=0pt, outer sep=0pt, scale=  1.00] at ( 81.60, 84.67) {32};

\node[text=drawColor,anchor=base west,inner sep=0pt, outer sep=0pt, scale=  1.00] at ( 81.60, 72.67) {128};
\end{scope}
\end{tikzpicture}}

  \caption{The effect of latency on the time to commit. The latency is stated
    relative to the optimistic quorum time (in small print for a quorum time of
    10'). The horizontal lines show a realistic scenario ($10$s for
    blocks, $100$ms for votes).
  }
  \label{fig:latency}
\end{figure}
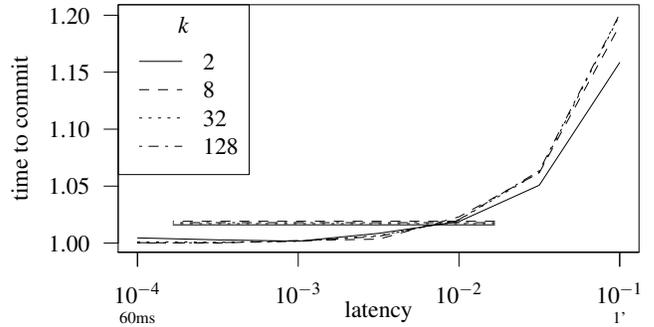

\subsubsection{Latency}\label{sssec:latency} We model the effect of latency by
injecting a random time delay between broadcast send and message delivery. We
draw delays from an exponential distribution with fixed expectation,
independently for each node and delivery.  Latency causes temporal state
inconsistencies.  In these periods, nodes spend their ATVs on extending
superseded blocks, or even produce temporal forks.
We observe that largely independent of the quorum size $\qsize$, expected
latencies below 1\,\% of the expected block time (Bitcoin: 6 seconds) have
marginal impact, while latencies in the order of 10\,\% of the expected block
time (Bitcoin: 60 seconds) delay the commit by about 20\,\%.
Figure~\ref{fig:latency} visualizes these results.

Empirical measurements~\cite{decker2013InformationPropagation,
croman2016ScalingDecentralized,gervais2016SecurityPerformance} suggest that the propagation
time of Bitcoin blocks ($\approx 500$ KB) is about 9 seconds on the Internet.
If we take this as an upper bound, we can argue that \theProtocol{} tolerates
practical latencies.  Moreover, most of \theProtocol{}'s messages are votes.
They are multiple orders of magnitude smaller ($72$ B; see
Sect.~\ref{ssec:overhead}), fit into a single packet, and are much easier to
verify than Bitcoin blocks.
Results of a simulation with different latencies for blocks ($10$s) and votes ($100$ms) suggest that \theProtocol{} can run at Internet scale with lower expected block time than 10 minutes.

\begin{figure}
  \resizebox{\columnwidth}{!}{
\begin{tikzpicture}[x=1pt,y=1pt]
\definecolor{fillColor}{RGB}{255,255,255}
\path[use as bounding box,fill=fillColor,fill opacity=0.00] (0,0) rectangle (267.40,137.31);
\begin{scope}
\path[clip] (  0.00,  0.00) rectangle (267.40,137.31);
\definecolor{drawColor}{RGB}{0,0,0}

\path[draw=drawColor,line width= 0.4pt,line join=round,line cap=round] ( 45.60, 35.41) -- ( 45.60,126.79);

\path[draw=drawColor,line width= 0.4pt,line join=round,line cap=round] ( 45.60, 35.41) -- ( 39.60, 35.41);

\path[draw=drawColor,line width= 0.4pt,line join=round,line cap=round] ( 45.60, 53.69) -- ( 39.60, 53.69);

\path[draw=drawColor,line width= 0.4pt,line join=round,line cap=round] ( 45.60, 71.96) -- ( 39.60, 71.96);

\path[draw=drawColor,line width= 0.4pt,line join=round,line cap=round] ( 45.60, 90.24) -- ( 39.60, 90.24);

\path[draw=drawColor,line width= 0.4pt,line join=round,line cap=round] ( 45.60,108.52) -- ( 39.60,108.52);

\path[draw=drawColor,line width= 0.4pt,line join=round,line cap=round] ( 45.60,126.79) -- ( 39.60,126.79);

\node[text=drawColor,anchor=base east,inner sep=0pt, outer sep=0pt, scale=  1.00] at ( 33.60, 31.97) {1.0};

\node[text=drawColor,anchor=base east,inner sep=0pt, outer sep=0pt, scale=  1.00] at ( 33.60, 50.24) {1.2};

\node[text=drawColor,anchor=base east,inner sep=0pt, outer sep=0pt, scale=  1.00] at ( 33.60, 68.52) {1.4};

\node[text=drawColor,anchor=base east,inner sep=0pt, outer sep=0pt, scale=  1.00] at ( 33.60, 86.80) {1.6};

\node[text=drawColor,anchor=base east,inner sep=0pt, outer sep=0pt, scale=  1.00] at ( 33.60,105.07) {1.8};

\node[text=drawColor,anchor=base east,inner sep=0pt, outer sep=0pt, scale=  1.00] at ( 33.60,123.35) {2.0};

\path[draw=drawColor,line width= 0.4pt,line join=round,line cap=round] ( 45.60, 31.20) --
	(266.20, 31.20) --
	(266.20,136.11) --
	( 45.60,136.11) --
	( 45.60, 31.20);
\end{scope}
\begin{scope}
\path[clip] (  0.00,  0.00) rectangle (267.40,137.31);
\definecolor{drawColor}{RGB}{0,0,0}

\node[text=drawColor,rotate= 90.00,anchor=base,inner sep=0pt, outer sep=0pt, scale=  1.00] at (  7.20, 83.66) {time to commit};
\end{scope}
\begin{scope}
\path[clip] (  0.00,  0.00) rectangle (267.40,137.31);
\definecolor{drawColor}{RGB}{0,0,0}

\path[draw=drawColor,line width= 0.4pt,line join=round,line cap=round] ( 53.77, 31.20) -- (258.03, 31.20);

\path[draw=drawColor,line width= 0.4pt,line join=round,line cap=round] ( 53.77, 31.20) -- ( 53.77, 25.20);

\path[draw=drawColor,line width= 0.4pt,line join=round,line cap=round] ( 94.62, 31.20) -- ( 94.62, 25.20);

\path[draw=drawColor,line width= 0.4pt,line join=round,line cap=round] (135.47, 31.20) -- (135.47, 25.20);

\path[draw=drawColor,line width= 0.4pt,line join=round,line cap=round] (176.33, 31.20) -- (176.33, 25.20);

\path[draw=drawColor,line width= 0.4pt,line join=round,line cap=round] (217.18, 31.20) -- (217.18, 25.20);

\path[draw=drawColor,line width= 0.4pt,line join=round,line cap=round] (258.03, 31.20) -- (258.03, 25.20);

\node[text=drawColor,anchor=base,inner sep=0pt, outer sep=0pt, scale=  1.00] at ( 53.77,  9.60) {0};

\node[text=drawColor,anchor=base,inner sep=0pt, outer sep=0pt, scale=  1.00] at ( 94.62,  9.60) {0.1};

\node[text=drawColor,anchor=base,inner sep=0pt, outer sep=0pt, scale=  1.00] at (217.18,  9.60) {0.4};

\node[text=drawColor,anchor=base,inner sep=0pt, outer sep=0pt, scale=  1.00] at (258.03,  9.60) {0.5};
\end{scope}
\begin{scope}
\path[clip] (  0.00,  0.00) rectangle (267.40,137.31);
\definecolor{drawColor}{RGB}{0,0,0}

\node[text=drawColor,anchor=base,inner sep=0pt, outer sep=0pt, scale=  1.00] at (155.90,  3.60) {churn ratio};
\end{scope}
\begin{scope}
\path[clip] ( 45.60, 31.20) rectangle (266.20,136.11);
\definecolor{drawColor}{RGB}{0,0,0}

\path[draw=drawColor,line width= 0.4pt,line join=round,line cap=round] ( 53.77, 35.39) --
	( 57.86, 36.34) --
	( 74.20, 40.83) --
	( 94.62, 45.91) --
	(135.47, 59.68) --
	(176.33, 76.08) --
	(217.18, 98.38) --
	(258.03,132.23);

\path[draw=drawColor,line width= 0.4pt,dash pattern=on 4pt off 4pt ,line join=round,line cap=round] ( 53.77, 35.52) --
	( 57.86, 36.22) --
	( 74.20, 40.16) --
	( 94.62, 45.48) --
	(135.47, 58.71) --
	(176.33, 75.18) --
	(217.18, 97.48) --
	(258.03,129.05);

\path[draw=drawColor,line width= 0.4pt,dash pattern=on 1pt off 3pt ,line join=round,line cap=round] ( 53.77, 35.40) --
	( 57.86, 36.38) --
	( 74.20, 40.38) --
	( 94.62, 45.59) --
	(135.47, 58.30) --
	(176.33, 74.70) --
	(217.18, 96.82) --
	(258.03,127.26);

\path[draw=drawColor,line width= 0.4pt,dash pattern=on 1pt off 3pt on 4pt off 3pt ,line join=round,line cap=round] ( 53.77, 35.39) --
	( 57.86, 36.34) --
	( 74.20, 40.20) --
	( 94.62, 45.60) --
	(135.47, 58.35) --
	(176.33, 74.65) --
	(217.18, 96.37) --
	(258.03,127.04);

\path[draw=drawColor,line width= 0.4pt,line join=round,line cap=round] ( 45.60,136.11) rectangle (101.10, 64.11);

\path[draw=drawColor,line width= 0.4pt,line join=round,line cap=round] ( 54.60,112.11) -- ( 72.60,112.11);

\path[draw=drawColor,line width= 0.4pt,dash pattern=on 4pt off 4pt ,line join=round,line cap=round] ( 54.60,100.11) -- ( 72.60,100.11);

\path[draw=drawColor,line width= 0.4pt,dash pattern=on 1pt off 3pt ,line join=round,line cap=round] ( 54.60, 88.11) -- ( 72.60, 88.11);

\path[draw=drawColor,line width= 0.4pt,dash pattern=on 1pt off 3pt on 4pt off 3pt ,line join=round,line cap=round] ( 54.60, 76.11) -- ( 72.60, 76.11);

\node[text=drawColor,anchor=base,inner sep=0pt, outer sep=0pt, scale=  1.00] at ( 73.35,124.11) {$k$};

\node[text=drawColor,anchor=base west,inner sep=0pt, outer sep=0pt, scale=  1.00] at ( 81.60,108.67) {2};

\node[text=drawColor,anchor=base west,inner sep=0pt, outer sep=0pt, scale=  1.00] at ( 81.60, 96.67) {8};

\node[text=drawColor,anchor=base west,inner sep=0pt, outer sep=0pt, scale=  1.00] at ( 81.60, 84.67) {32};

\node[text=drawColor,anchor=base west,inner sep=0pt, outer sep=0pt, scale=  1.00] at ( 81.60, 72.67) {128};
\end{scope}
\end{tikzpicture}}

  \caption{The effect of churn on the time to commit.}
  \label{fig:churn}
\end{figure}
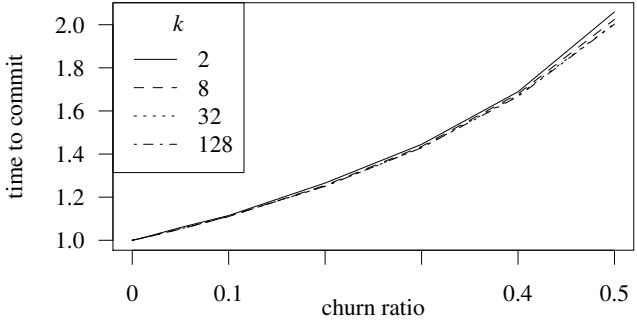

\subsubsection{Churn}
We simulate churn by muting a fraction (churn ratio) of random nodes for 10
times the expected block time.  Muted nodes can receive ATVs but do not send or
receive messages.  Accordingly, the ATVs assigned to muted nodes represent lost
work. We expect that the time to commit is inversely proportional to the churn
ratio: if 50\,\% of the nodes are muted, the time to commit is twice as long,
independent of the quorum size.  Figure~\ref{fig:churn} supports this claim.

\subsubsection{Leader Failure} \label{sssec:leader-failure}
Leaders may fail to propose blocks. We model such failures by dropping block
proposals randomly with constant probability (leader failure rate).

In \nc{}, lost proposals imply a full block worth of wasted work.
\theProtocol{} can reuse votes for different proposals.  Honest nodes reveal at
most one new vote with their proposal. Accordingly, a lost proposal wastes at
most the work of one vote. Therefore, with increasing quorum size the
robustness to leader failure should improve.  The results in
Figure~\ref{fig:failure_real} (with realistic 10s/100ms latency) and
Figure~\ref{fig:failure} (without latency to isolate effects) support this claim.  For perspective,
the right end of the graph simulates a situation where an attacker can
monitor all nodes' network traffic and disconnect nodes at discretion with
50\,\% success probability. Still, for large quorum sizes the time to commit is
not longer than under the extreme latencies discussed in
Section~\ref{sssec:latency}.

\begin{figure}
  \resizebox{\columnwidth}{!}{
\begin{tikzpicture}[x=1pt,y=1pt]
\definecolor{fillColor}{RGB}{255,255,255}
\path[use as bounding box,fill=fillColor,fill opacity=0.00] (0,0) rectangle (267.40,137.31);
\begin{scope}
\path[clip] (  0.00,  0.00) rectangle (267.40,137.31);
\definecolor{drawColor}{RGB}{0,0,0}

\path[draw=drawColor,line width= 0.4pt,line join=round,line cap=round] ( 45.60, 31.98) -- ( 45.60,128.54);

\path[draw=drawColor,line width= 0.4pt,line join=round,line cap=round] ( 45.60, 31.98) -- ( 39.60, 31.98);

\path[draw=drawColor,line width= 0.4pt,line join=round,line cap=round] ( 45.60, 51.29) -- ( 39.60, 51.29);

\path[draw=drawColor,line width= 0.4pt,line join=round,line cap=round] ( 45.60, 70.60) -- ( 39.60, 70.60);

\path[draw=drawColor,line width= 0.4pt,line join=round,line cap=round] ( 45.60, 89.91) -- ( 39.60, 89.91);

\path[draw=drawColor,line width= 0.4pt,line join=round,line cap=round] ( 45.60,109.22) -- ( 39.60,109.22);

\path[draw=drawColor,line width= 0.4pt,line join=round,line cap=round] ( 45.60,128.54) -- ( 39.60,128.54);

\node[text=drawColor,anchor=base east,inner sep=0pt, outer sep=0pt, scale=  1.00] at ( 33.60, 28.53) {1.0};

\node[text=drawColor,anchor=base east,inner sep=0pt, outer sep=0pt, scale=  1.00] at ( 33.60, 47.85) {1.1};

\node[text=drawColor,anchor=base east,inner sep=0pt, outer sep=0pt, scale=  1.00] at ( 33.60, 67.16) {1.2};

\node[text=drawColor,anchor=base east,inner sep=0pt, outer sep=0pt, scale=  1.00] at ( 33.60, 86.47) {1.3};

\node[text=drawColor,anchor=base east,inner sep=0pt, outer sep=0pt, scale=  1.00] at ( 33.60,105.78) {1.4};

\node[text=drawColor,anchor=base east,inner sep=0pt, outer sep=0pt, scale=  1.00] at ( 33.60,125.09) {1.5};

\path[draw=drawColor,line width= 0.4pt,line join=round,line cap=round] ( 45.60, 31.20) --
	(266.20, 31.20) --
	(266.20,136.11) --
	( 45.60,136.11) --
	( 45.60, 31.20);
\end{scope}
\begin{scope}
\path[clip] (  0.00,  0.00) rectangle (267.40,137.31);
\definecolor{drawColor}{RGB}{0,0,0}

\node[text=drawColor,rotate= 90.00,anchor=base,inner sep=0pt, outer sep=0pt, scale=  1.00] at (  7.20, 83.66) {time to commit};
\end{scope}
\begin{scope}
\path[clip] (  0.00,  0.00) rectangle (267.40,137.31);
\definecolor{drawColor}{RGB}{0,0,0}

\path[draw=drawColor,line width= 0.4pt,line join=round,line cap=round] ( 53.77, 31.20) -- (258.03, 31.20);

\path[draw=drawColor,line width= 0.4pt,line join=round,line cap=round] ( 53.77, 31.20) -- ( 53.77, 25.20);

\path[draw=drawColor,line width= 0.4pt,line join=round,line cap=round] ( 94.62, 31.20) -- ( 94.62, 25.20);

\path[draw=drawColor,line width= 0.4pt,line join=round,line cap=round] (135.47, 31.20) -- (135.47, 25.20);

\path[draw=drawColor,line width= 0.4pt,line join=round,line cap=round] (176.33, 31.20) -- (176.33, 25.20);

\path[draw=drawColor,line width= 0.4pt,line join=round,line cap=round] (217.18, 31.20) -- (217.18, 25.20);

\path[draw=drawColor,line width= 0.4pt,line join=round,line cap=round] (258.03, 31.20) -- (258.03, 25.20);

\node[text=drawColor,anchor=base,inner sep=0pt, outer sep=0pt, scale=  1.00] at ( 53.77,  9.60) {0};

\node[text=drawColor,anchor=base,inner sep=0pt, outer sep=0pt, scale=  1.00] at ( 94.62,  9.60) {0.1};

\node[text=drawColor,anchor=base,inner sep=0pt, outer sep=0pt, scale=  1.00] at (217.18,  9.60) {0.4};

\node[text=drawColor,anchor=base,inner sep=0pt, outer sep=0pt, scale=  1.00] at (258.03,  9.60) {0.5};
\end{scope}
\begin{scope}
\path[clip] (  0.00,  0.00) rectangle (267.40,137.31);
\definecolor{drawColor}{RGB}{0,0,0}

\node[text=drawColor,anchor=base,inner sep=0pt, outer sep=0pt, scale=  1.00] at (155.90,  3.60) {leader failure rate};
\end{scope}
\begin{scope}
\path[clip] ( 45.60, 31.20) rectangle (266.20,136.11);
\definecolor{drawColor}{RGB}{0,0,0}

\path[draw=drawColor,line width= 0.4pt,line join=round,line cap=round] ( 53.77, 35.95) --
	( 57.86, 36.56) --
	( 74.20, 40.33) --
	( 94.62, 45.69) --
	(135.47, 59.04) --
	(176.33, 77.32) --
	(217.18,101.15) --
	(258.03,132.23);

\path[draw=drawColor,line width= 0.4pt,dash pattern=on 4pt off 4pt ,line join=round,line cap=round] ( 53.77, 35.83) --
	( 57.86, 35.75) --
	( 74.20, 36.68) --
	( 94.62, 37.85) --
	(135.47, 41.41) --
	(176.33, 45.49) --
	(217.18, 51.31) --
	(258.03, 60.07);

\path[draw=drawColor,line width= 0.4pt,dash pattern=on 1pt off 3pt ,line join=round,line cap=round] ( 53.77, 35.09) --
	( 57.86, 35.46) --
	( 74.20, 35.76) --
	( 94.62, 36.06) --
	(135.47, 37.12) --
	(176.33, 38.05) --
	(217.18, 39.49) --
	(258.03, 41.63);

\path[draw=drawColor,line width= 0.4pt,dash pattern=on 1pt off 3pt on 4pt off 3pt ,line join=round,line cap=round] ( 53.77, 35.63) --
	( 57.86, 35.53) --
	( 74.20, 35.41) --
	( 94.62, 35.58) --
	(135.47, 35.92) --
	(176.33, 36.23) --
	(217.18, 36.60) --
	(258.03, 37.03);

\path[draw=drawColor,line width= 0.4pt,line join=round,line cap=round] ( 45.60,136.11) rectangle (101.10, 64.11);

\path[draw=drawColor,line width= 0.4pt,line join=round,line cap=round] ( 54.60,112.11) -- ( 72.60,112.11);

\path[draw=drawColor,line width= 0.4pt,dash pattern=on 4pt off 4pt ,line join=round,line cap=round] ( 54.60,100.11) -- ( 72.60,100.11);

\path[draw=drawColor,line width= 0.4pt,dash pattern=on 1pt off 3pt ,line join=round,line cap=round] ( 54.60, 88.11) -- ( 72.60, 88.11);

\path[draw=drawColor,line width= 0.4pt,dash pattern=on 1pt off 3pt on 4pt off 3pt ,line join=round,line cap=round] ( 54.60, 76.11) -- ( 72.60, 76.11);

\node[text=drawColor,anchor=base,inner sep=0pt, outer sep=0pt, scale=  1.00] at ( 73.35,124.11) {$k$};

\node[text=drawColor,anchor=base west,inner sep=0pt, outer sep=0pt, scale=  1.00] at ( 81.60,108.67) {2};

\node[text=drawColor,anchor=base west,inner sep=0pt, outer sep=0pt, scale=  1.00] at ( 81.60, 96.67) {8};

\node[text=drawColor,anchor=base west,inner sep=0pt, outer sep=0pt, scale=  1.00] at ( 81.60, 84.67) {32};

\node[text=drawColor,anchor=base west,inner sep=0pt, outer sep=0pt, scale=  1.00] at ( 81.60, 72.67) {128};
\end{scope}
\end{tikzpicture}}

  \caption{The effect of leader failure on the time to commit.}
  \label{fig:failure_real}
\end{figure}
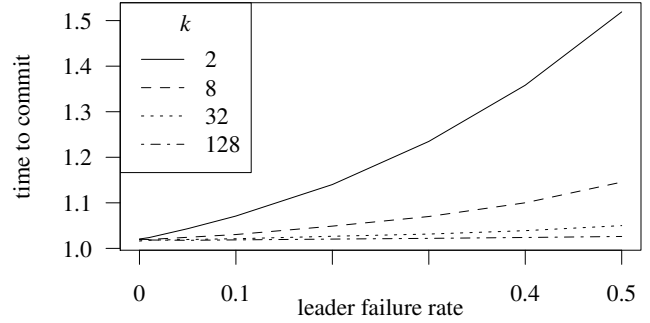

\medskip

The robustness against churn and leader failure emerges from \theProtocol{}'s novel approach to form short-lived committees from ephemeral identities. This maintains liveness even under the threat of powerful network-level attacks. We move on to the discussion of attacks on the protocol layer.

\subsection{Security} \label{ssec:security}

The security evaluation draws on the framework by~\citet{zhang2019LayCommon}. 
It distinguishes the security aspects proof-of-work blockchains should fulfill: chain quality, incentive compatibility, subversion gain, and censorship susceptibility.

The authors suggest Markov Decision Processes (MDP) as method and apply it to several variants of \nc{}. However, state explosion prevented them from modeling Bobtail,%
\footnote{\citet{zhang2019LayCommon} were aware of Bobtail and chose not to model it. This is confirmed in private communication with the authors of Bobtail~\cite{bissias2020BobtailImproved}.} 
because it ranks proof-of-work solutions by magnitude. Since \theProtocol{} adopts this ranking for the leader election (Sect.~\ref{sssec:proto_leader}), it does not seem readily amenable to MDPs, either. 
We thus resort to informal reasoning and simulation. %

Following the convention in the literature, we assume two agents. Let
$\lambda$ be the total compute power. The attacker has $\alpha\cdot\lambda$ compute power, the
honest agent controls the rest. The honest agent operates correct nodes, while the attacker 
operates a single node that may deviate from the protocol specification.

\subsubsection{Subversion Gain} 
The canonical example for subversion gain in cryptocurrencies is double spending: 
the attacker wants at least one of the honest nodes (the merchant) to act on
inconsistent state. \theProtocol{} supports commits, hence we neither need to
consider the possibility of history rewriting nor the double
spending of \emph{un}committed transactions.%
\footnote{Sound applications on a system with finality wait until the commit. \theProtocol{} can be parametrized to acceptable commit times for
economic exchanges between humans. (High-frequency trading needs other
architectures.)}
\nc{} suffers from these problems~\cite{karame2012DoublespendingFast, heilman2015EclipseAttacks,
gervais2016SecurityPerformance, budish2018EconomicLimits,
apostolaki2017HijackingBitcoin}.

The only remaining strategy is splitting the network so that the
recipients of at least two different double-spend transactions commit to
different states. This loss of consistency would materialize in permanent forks
that require out-of-band resolutions (triggered by an else-branch after code line~\ref{l:detectfork}).

In order to understand how \theProtocol{} ensures consistency, it is instructive to
recall the block preference rule in Sect.~\ref{sssec:proto_preference}. Assume
counterfactually that nodes never update their value according to received votes. 
Then, an attacker who becomes the leader could send different proposals to each node. 
This would fragment the honest nodes' compute power and give the attacker time to form
six quorums, three per conflicting state. The probability of the attacker
becoming leader is at least $\alpha$ in each round. %
This would be a catastrophic attack.

The actual block preference rule selects the value with the highest progress among all
known proposals. Therefore, as soon as the first vote is received from an honest node,
all honest nodes converge to a single value. As a result, the
attacker would have to form six complete quorums in the time the honest nodes
get assigned a single ATV and broadcast the corresponding vote. Since $\frac{6 \qsize}{\alpha} \gg \frac1{1-\alpha}$, such an attack becomes infeasible for large quorum sizes \qsize{} and $\alpha<1/2$.

\subsubsection{Censoring} \label{sssec:censoring}

In the censoring scenario, the attacker wants to control the values on which
consensus is achieved for some time. This means he has to be
elected as leader in multiple ($m$) consecutive blocks.

We start with the probability of an attacker becoming the leader in a single round.
Without deviating from the protocol, he leads with probability~$\alpha$. 
This means he could successfully censor \theProtocol{} for $m$ consecutive blocks with probability~$\alpha^m$.

However, naively following the protocol is not the best censoring strategy.
Taking inspiration from the work on selfish mining~\cite{eyal2014MajorityNot,
sapirshtein2016OptimalSelfish, kiayias2016BlockchainMining}, we argue that an attacker can
do better by withholding information. A selfish miner in \nc{} withholds complete blocks, 
such that other miners work on an irrelevant part of the chain. 
\theProtocol{} has a more granular type of information: an attacker might
withhold his votes. A censoring attacker would release his votes only when the
release implies leadership. In practice, this means that a censoring attacker
does not share votes, he only proposes blocks. Using this strategy, the
attacker can delay the next quorum until the honest nodes can form one without
the attacker's votes. This time window increases the attacker's odds of
becoming the leader.

We implement this \emph{censor} strategy and instantiate it in a special
attacker node of the simulation environment (see Fig.~\ref{fig:simulator}). We
bias the assignment of ATVs towards this node such that it posesses
computational power~$\alpha$.
We routinely check for forks, but do not find any.
We count how many of the committed blocks are proposed by the attacker in order
to estimate the probability of leadership per round.
Figure~\ref{fig:leadership} shows this estimate as a function of the quorum
size for different attacker strengths $\alpha$.
Using the described withholding strategy, an $\alpha=1/3$ attacker contributes
roughly 42\,\% ($\alpha=1/2$: 64\,\%) of the blocks. For comparison, the upper bound for block
withholding strategies for the same attacker on \nc{} is 50\,\%
($\alpha=1/2$: 100\,\%)~\cite{sapirshtein2016OptimalSelfish}.

We additionally validate the results on the censor strategy using an independent
Monte Carlo (MC) simulation. (See Appendix~\ref{apx:mcmc} for details.) As depicted in
Figure~\ref{fig:leadership}, the MC analysis confirms the network simulation.

\begin{figure}
  \resizebox{\columnwidth}{!}{
\begin{tikzpicture}[x=1pt,y=1pt]
\definecolor{fillColor}{RGB}{255,255,255}
\path[use as bounding box,fill=fillColor,fill opacity=0.00] (0,0) rectangle (267.40,173.45);
\begin{scope}
\path[clip] (  0.00,  0.00) rectangle (267.40,173.45);
\definecolor{drawColor}{RGB}{0,0,0}

\path[draw=drawColor,line width= 0.4pt,line join=round,line cap=round] ( 45.60, 36.42) -- ( 45.60,167.02);

\path[draw=drawColor,line width= 0.4pt,line join=round,line cap=round] ( 45.60, 36.42) -- ( 39.60, 36.42);

\path[draw=drawColor,line width= 0.4pt,line join=round,line cap=round] ( 45.60, 62.54) -- ( 39.60, 62.54);

\path[draw=drawColor,line width= 0.4pt,line join=round,line cap=round] ( 45.60, 88.66) -- ( 39.60, 88.66);

\path[draw=drawColor,line width= 0.4pt,line join=round,line cap=round] ( 45.60,114.78) -- ( 39.60,114.78);

\path[draw=drawColor,line width= 0.4pt,line join=round,line cap=round] ( 45.60,140.90) -- ( 39.60,140.90);

\path[draw=drawColor,line width= 0.4pt,line join=round,line cap=round] ( 45.60,167.02) -- ( 39.60,167.02);

\node[text=drawColor,anchor=base east,inner sep=0pt, outer sep=0pt, scale=  1.00] at ( 33.60, 32.98) {0.0};

\node[text=drawColor,anchor=base east,inner sep=0pt, outer sep=0pt, scale=  1.00] at ( 33.60, 59.10) {0.2};

\node[text=drawColor,anchor=base east,inner sep=0pt, outer sep=0pt, scale=  1.00] at ( 33.60, 85.22) {0.4};

\node[text=drawColor,anchor=base east,inner sep=0pt, outer sep=0pt, scale=  1.00] at ( 33.60,111.34) {0.6};

\node[text=drawColor,anchor=base east,inner sep=0pt, outer sep=0pt, scale=  1.00] at ( 33.60,137.46) {0.8};

\node[text=drawColor,anchor=base east,inner sep=0pt, outer sep=0pt, scale=  1.00] at ( 33.60,163.58) {1.0};

\path[draw=drawColor,line width= 0.4pt,line join=round,line cap=round] ( 45.60, 31.20) --
	(266.20, 31.20) --
	(266.20,172.25) --
	( 45.60,172.25) --
	( 45.60, 31.20);
\end{scope}
\begin{scope}
\path[clip] (  0.00,  0.00) rectangle (267.40,173.45);
\definecolor{drawColor}{RGB}{0,0,0}

\node[text=drawColor,rotate= 90.00,anchor=base,inner sep=0pt, outer sep=0pt, scale=  1.00] at (  7.20,101.72) {leadership};
\end{scope}
\begin{scope}
\path[clip] (  0.00,  0.00) rectangle (267.40,173.45);
\definecolor{drawColor}{RGB}{0,0,0}

\path[draw=drawColor,line width= 0.4pt,line join=round,line cap=round] ( 53.77, 31.20) -- (258.03, 31.20);

\path[draw=drawColor,line width= 0.4pt,line join=round,line cap=round] ( 53.77, 31.20) -- ( 53.77, 25.20);

\path[draw=drawColor,line width= 0.4pt,line join=round,line cap=round] ( 79.30, 31.20) -- ( 79.30, 25.20);

\path[draw=drawColor,line width= 0.4pt,line join=round,line cap=round] (104.83, 31.20) -- (104.83, 25.20);

\path[draw=drawColor,line width= 0.4pt,line join=round,line cap=round] (130.37, 31.20) -- (130.37, 25.20);

\path[draw=drawColor,line width= 0.4pt,line join=round,line cap=round] (155.90, 31.20) -- (155.90, 25.20);

\path[draw=drawColor,line width= 0.4pt,line join=round,line cap=round] (181.43, 31.20) -- (181.43, 25.20);

\path[draw=drawColor,line width= 0.4pt,line join=round,line cap=round] (206.96, 31.20) -- (206.96, 25.20);

\path[draw=drawColor,line width= 0.4pt,line join=round,line cap=round] (232.50, 31.20) -- (232.50, 25.20);

\path[draw=drawColor,line width= 0.4pt,line join=round,line cap=round] (258.03, 31.20) -- (258.03, 25.20);

\node[text=drawColor,anchor=base,inner sep=0pt, outer sep=0pt, scale=  1.00] at ( 53.77,  9.60) {1};

\node[text=drawColor,anchor=base,inner sep=0pt, outer sep=0pt, scale=  1.00] at ( 79.30,  9.60) {2};

\node[text=drawColor,anchor=base,inner sep=0pt, outer sep=0pt, scale=  1.00] at (104.83,  9.60) {4};

\node[text=drawColor,anchor=base,inner sep=0pt, outer sep=0pt, scale=  1.00] at (130.37,  9.60) {8};

\node[text=drawColor,anchor=base,inner sep=0pt, outer sep=0pt, scale=  1.00] at (181.43,  9.60) {32};

\node[text=drawColor,anchor=base,inner sep=0pt, outer sep=0pt, scale=  1.00] at (206.96,  9.60) {64};

\node[text=drawColor,anchor=base,inner sep=0pt, outer sep=0pt, scale=  1.00] at (232.50,  9.60) {128};

\node[text=drawColor,anchor=base,inner sep=0pt, outer sep=0pt, scale=  1.00] at (258.03,  9.60) {256};
\end{scope}
\begin{scope}
\path[clip] (  0.00,  0.00) rectangle (267.40,173.45);
\definecolor{drawColor}{RGB}{0,0,0}

\node[text=drawColor,anchor=base,inner sep=0pt, outer sep=0pt, scale=  1.00] at (155.90,  3.60) {$k$};
\end{scope}
\begin{scope}
\path[clip] ( 45.60, 31.20) rectangle (266.20,172.25);
\definecolor{drawColor}{RGB}{255,177,192}

\path[draw=drawColor,line width= 0.4pt,dash pattern=on 1pt off 3pt ,line join=round,line cap=round] ( 53.77, 39.04) --
	( 79.30, 39.05) --
	(104.83, 39.06) --
	(130.37, 39.10) --
	(155.90, 39.10) --
	(181.43, 39.09) --
	(206.96, 39.07) --
	(232.50, 39.09) --
	(258.03, 39.10);
\definecolor{drawColor}{RGB}{197,78,109}

\path[draw=drawColor,line width= 0.4pt,line join=round,line cap=round] ( 53.77, 39.06) --
	( 79.30, 39.16) --
	(104.83, 38.97) --
	(130.37, 39.12) --
	(155.90, 39.08) --
	(181.43, 39.06) --
	(206.96, 39.11) --
	(232.50, 38.92) --
	(258.03, 39.13);

\path[draw=drawColor,line width= 0.4pt,dash pattern=on 4pt off 4pt ,line join=round,line cap=round] ( 53.77, 38.92) --
	( 79.30, 39.12) --
	(104.83, 39.07) --
	(130.37, 39.01) --
	(155.90, 39.13) --
	(181.43, 39.09) --
	(206.96, 39.17) --
	(232.50, 38.94) --
	(258.03, 39.10);
\definecolor{drawColor}{RGB}{216,198,131}

\path[draw=drawColor,line width= 0.4pt,dash pattern=on 1pt off 3pt ,line join=round,line cap=round] ( 53.77, 49.53) --
	( 79.30, 49.88) --
	(104.83, 50.22) --
	(130.37, 50.45) --
	(155.90, 50.60) --
	(181.43, 50.65) --
	(206.96, 50.75) --
	(232.50, 50.76) --
	(258.03, 50.73);
\definecolor{drawColor}{RGB}{143,118,0}

\path[draw=drawColor,line width= 0.4pt,line join=round,line cap=round] ( 53.77, 49.68) --
	( 79.30, 49.69) --
	(104.83, 49.47) --
	(130.37, 49.63) --
	(155.90, 49.48) --
	(181.43, 49.58) --
	(206.96, 49.24) --
	(232.50, 49.58) --
	(258.03, 49.62);

\path[draw=drawColor,line width= 0.4pt,dash pattern=on 4pt off 4pt ,line join=round,line cap=round] ( 53.77, 49.58) --
	( 79.30, 50.09) --
	(104.83, 50.42) --
	(130.37, 50.22) --
	(155.90, 50.80) --
	(181.43, 50.85) --
	(206.96, 50.66) --
	(232.50, 50.78) --
	(258.03, 50.45);
\definecolor{drawColor}{RGB}{131,216,167}

\path[draw=drawColor,line width= 0.4pt,dash pattern=on 1pt off 3pt ,line join=round,line cap=round] ( 53.77, 62.56) --
	( 79.30, 64.10) --
	(104.83, 65.27) --
	(130.37, 66.06) --
	(155.90, 66.50) --
	(181.43, 66.97) --
	(206.96, 67.01) --
	(232.50, 67.03) --
	(258.03, 67.02);
\definecolor{drawColor}{RGB}{0,143,61}

\path[draw=drawColor,line width= 0.4pt,line join=round,line cap=round] ( 53.77, 62.37) --
	( 79.30, 62.64) --
	(104.83, 62.31) --
	(130.37, 62.90) --
	(155.90, 62.62) --
	(181.43, 62.30) --
	(206.96, 62.73) --
	(232.50, 62.55) --
	(258.03, 62.38);

\path[draw=drawColor,line width= 0.4pt,dash pattern=on 4pt off 4pt ,line join=round,line cap=round] ( 53.77, 62.33) --
	( 79.30, 63.59) --
	(104.83, 64.91) --
	(130.37, 66.28) --
	(155.90, 66.35) --
	(181.43, 66.91) --
	(206.96, 67.06) --
	(232.50, 67.27) --
	(258.03, 66.84);
\definecolor{drawColor}{RGB}{119,211,236}

\path[draw=drawColor,line width= 0.4pt,dash pattern=on 1pt off 3pt ,line join=round,line cap=round] ( 53.77, 80.02) --
	( 79.30, 83.88) --
	(104.83, 86.74) --
	(130.37, 88.70) --
	(155.90, 89.88) --
	(181.43, 90.41) --
	(206.96, 90.72) --
	(232.50, 90.81) --
	(258.03, 90.98);
\definecolor{drawColor}{RGB}{0,140,176}

\path[draw=drawColor,line width= 0.4pt,line join=round,line cap=round] ( 53.77, 80.04) --
	( 79.30, 80.03) --
	(104.83, 79.87) --
	(130.37, 79.61) --
	(155.90, 79.75) --
	(181.43, 79.70) --
	(206.96, 80.04) --
	(232.50, 79.50) --
	(258.03, 79.65);

\path[draw=drawColor,line width= 0.4pt,dash pattern=on 4pt off 4pt ,line join=round,line cap=round] ( 53.77, 79.56) --
	( 79.30, 83.40) --
	(104.83, 86.69) --
	(130.37, 88.72) --
	(155.90, 89.56) --
	(181.43, 90.29) --
	(206.96, 90.34) --
	(232.50, 90.91) --
	(258.03, 91.04);
\definecolor{drawColor}{RGB}{221,184,249}

\path[draw=drawColor,line width= 0.4pt,dash pattern=on 1pt off 3pt ,line join=round,line cap=round] ( 53.77,101.81) --
	( 79.30,109.04) --
	(104.83,114.53) --
	(130.37,117.57) --
	(155.90,119.20) --
	(181.43,119.98) --
	(206.96,120.44) --
	(232.50,120.59) --
	(258.03,120.70);
\definecolor{drawColor}{RGB}{158,89,199}

\path[draw=drawColor,line width= 0.4pt,line join=round,line cap=round] ( 53.77,101.89) --
	( 79.30,102.22) --
	(104.83,101.80) --
	(130.37,101.95) --
	(155.90,101.90) --
	(181.43,102.09) --
	(206.96,101.34) --
	(232.50,102.03) --
	(258.03,101.93);

\path[draw=drawColor,line width= 0.4pt,dash pattern=on 4pt off 4pt ,line join=round,line cap=round] ( 53.77,102.09) --
	( 79.30,107.52) --
	(104.83,112.05) --
	(130.37,115.09) --
	(155.90,116.84) --
	(181.43,118.01) --
	(206.96,119.22) --
	(232.50,119.11) --
	(258.03,120.14);
\definecolor{drawColor}{RGB}{0,0,0}

\path[draw=drawColor,line width= 0.4pt,line join=round,line cap=round] ( 45.60,172.25) rectangle (140.89,124.25);
\definecolor{drawColor}{gray}{0.28}

\path[draw=drawColor,line width= 0.4pt,line join=round,line cap=round] ( 54.60,160.25) -- ( 72.60,160.25);

\path[draw=drawColor,line width= 0.4pt,dash pattern=on 4pt off 4pt ,line join=round,line cap=round] ( 54.60,148.25) -- ( 72.60,148.25);
\definecolor{drawColor}{gray}{0.67}

\path[draw=drawColor,line width= 0.4pt,dash pattern=on 1pt off 3pt ,line join=round,line cap=round] ( 54.60,136.25) -- ( 72.60,136.25);
\definecolor{drawColor}{RGB}{0,0,0}

\node[text=drawColor,anchor=base west,inner sep=0pt, outer sep=0pt, scale=  1.00] at ( 81.60,156.80) {naive};

\node[text=drawColor,anchor=base west,inner sep=0pt, outer sep=0pt, scale=  1.00] at ( 81.60,144.80) {censor};

\node[text=drawColor,anchor=base west,inner sep=0pt, outer sep=0pt, scale=  1.00] at ( 81.60,132.80) {censor (MC)};
\end{scope}
\end{tikzpicture}}

  \caption{The attacker's share of committed block as a function of
    quorum size for $\alpha \in \left\{\frac{1}{50}, \frac{1}{10}, \frac{1}{5},
    \frac{1}{3}, \frac{1}{2}\right\}$ (bottom-up) in two independent
    simulations (network and MC).}
  \label{fig:leadership}
\end{figure}
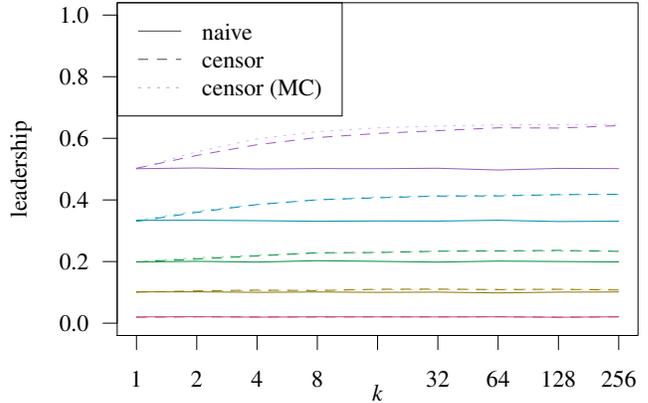

\subsubsection{Chain Quality and Incentive Compatibility} The prevalent strategy
for increasing the own share of blocks and rewards is selfish
mining~\cite{eyal2014MajorityNot, sapirshtein2016OptimalSelfish, negy2020SelfishMining}.
This attack is inherently connected with
incentives.  Its basic idea is to withhold and strategically release blocks in
order to create an information asymmetry that allows to reap a disproportional
amount of rewards for the invested share of work.  This idea is not directly
transferrable from \nc{} to \theProtocol{} for three reasons.
First, the finality after three blocks substantially limits the horizon of the
selfish miner.
Second, block proposals are less valuable. They are not significant sources of
reward.
Third, block proposals are less critical. In fact, block withholding reduces to
the situation of leader failure. Since votes can be reused, honest nodes can
replace missing proposals very fast (see Section~\ref{sssec:leader-failure}).
This makes proposals less rare events than in \nc{}, limiting the
strategic advantage of withholding them.

However, as we have argued in Section~\ref{sssec:censoring}, it is a valid
strategy to \emph{withhold votes}. Therefore, we analyze the effect of vote
withholding on the distribution of rewards, assuming a constant reward per
committed vote, like in Bobtail~\cite{bissias2020BobtailImproved}.
The naive strategy yields a share of~$\alpha$ of the votes. The attacker's goal
is to maximize the number of votes he contributes to each quorum.  Since only
the leader can decide which votes are included in a proposed quorum, the first
step of optimal vote withholding is to increase the odds of becoming the
leader. This, in turn, can be achieved by withholding votes! The circularity
indicates that the attack can be %
approximated with the censoring strategy discussed in
Section~\ref{sssec:censoring}.

Figure~\ref{fig:votes} shows simulation results on how the strategy, $\alpha$,
and the quorum size affect the share of attacker votes committed to the chain.
Interestingly, the censor receives fewer rewards than honest nodes and naive
attackers, indicating a dilemma between paying for becoming the leader and
capitalizing the power of leadership. The tradeoff is visible by comparing
Figures~\ref{fig:leadership} and \ref{fig:votes}.
A similar tradeoff appears for the so-called ``proof withholding''
strategy in Bobtail~\cite{bissias2020BobtailImproved}, which resembles the censoring strategy in \theProtocol{}.

Again, we compare the protocol implementation in the network simulation with
the idealized MC model described in Appendix~\ref{apx:mcmc}.

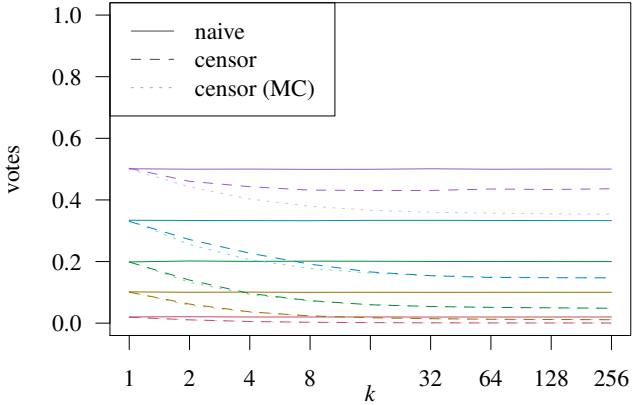
\begin{figure}
  \resizebox{\columnwidth}{!}{
\begin{tikzpicture}[x=1pt,y=1pt]
\definecolor{fillColor}{RGB}{255,255,255}
\path[use as bounding box,fill=fillColor,fill opacity=0.00] (0,0) rectangle (267.40,173.45);
\begin{scope}
\path[clip] (  0.00,  0.00) rectangle (267.40,173.45);
\definecolor{drawColor}{RGB}{0,0,0}

\path[draw=drawColor,line width= 0.4pt,line join=round,line cap=round] ( 45.60, 36.42) -- ( 45.60,167.02);

\path[draw=drawColor,line width= 0.4pt,line join=round,line cap=round] ( 45.60, 36.42) -- ( 39.60, 36.42);

\path[draw=drawColor,line width= 0.4pt,line join=round,line cap=round] ( 45.60, 62.54) -- ( 39.60, 62.54);

\path[draw=drawColor,line width= 0.4pt,line join=round,line cap=round] ( 45.60, 88.66) -- ( 39.60, 88.66);

\path[draw=drawColor,line width= 0.4pt,line join=round,line cap=round] ( 45.60,114.78) -- ( 39.60,114.78);

\path[draw=drawColor,line width= 0.4pt,line join=round,line cap=round] ( 45.60,140.90) -- ( 39.60,140.90);

\path[draw=drawColor,line width= 0.4pt,line join=round,line cap=round] ( 45.60,167.02) -- ( 39.60,167.02);

\node[text=drawColor,anchor=base east,inner sep=0pt, outer sep=0pt, scale=  1.00] at ( 33.60, 32.98) {0.0};

\node[text=drawColor,anchor=base east,inner sep=0pt, outer sep=0pt, scale=  1.00] at ( 33.60, 59.10) {0.2};

\node[text=drawColor,anchor=base east,inner sep=0pt, outer sep=0pt, scale=  1.00] at ( 33.60, 85.22) {0.4};

\node[text=drawColor,anchor=base east,inner sep=0pt, outer sep=0pt, scale=  1.00] at ( 33.60,111.34) {0.6};

\node[text=drawColor,anchor=base east,inner sep=0pt, outer sep=0pt, scale=  1.00] at ( 33.60,137.46) {0.8};

\node[text=drawColor,anchor=base east,inner sep=0pt, outer sep=0pt, scale=  1.00] at ( 33.60,163.58) {1.0};

\path[draw=drawColor,line width= 0.4pt,line join=round,line cap=round] ( 45.60, 31.20) --
	(266.20, 31.20) --
	(266.20,172.25) --
	( 45.60,172.25) --
	( 45.60, 31.20);
\end{scope}
\begin{scope}
\path[clip] (  0.00,  0.00) rectangle (267.40,173.45);
\definecolor{drawColor}{RGB}{0,0,0}

\node[text=drawColor,rotate= 90.00,anchor=base,inner sep=0pt, outer sep=0pt, scale=  1.00] at (  7.20,101.72) {votes};
\end{scope}
\begin{scope}
\path[clip] (  0.00,  0.00) rectangle (267.40,173.45);
\definecolor{drawColor}{RGB}{0,0,0}

\path[draw=drawColor,line width= 0.4pt,line join=round,line cap=round] ( 53.77, 31.20) -- (258.03, 31.20);

\path[draw=drawColor,line width= 0.4pt,line join=round,line cap=round] ( 53.77, 31.20) -- ( 53.77, 25.20);

\path[draw=drawColor,line width= 0.4pt,line join=round,line cap=round] ( 79.30, 31.20) -- ( 79.30, 25.20);

\path[draw=drawColor,line width= 0.4pt,line join=round,line cap=round] (104.83, 31.20) -- (104.83, 25.20);

\path[draw=drawColor,line width= 0.4pt,line join=round,line cap=round] (130.37, 31.20) -- (130.37, 25.20);

\path[draw=drawColor,line width= 0.4pt,line join=round,line cap=round] (155.90, 31.20) -- (155.90, 25.20);

\path[draw=drawColor,line width= 0.4pt,line join=round,line cap=round] (181.43, 31.20) -- (181.43, 25.20);

\path[draw=drawColor,line width= 0.4pt,line join=round,line cap=round] (206.96, 31.20) -- (206.96, 25.20);

\path[draw=drawColor,line width= 0.4pt,line join=round,line cap=round] (232.50, 31.20) -- (232.50, 25.20);

\path[draw=drawColor,line width= 0.4pt,line join=round,line cap=round] (258.03, 31.20) -- (258.03, 25.20);

\node[text=drawColor,anchor=base,inner sep=0pt, outer sep=0pt, scale=  1.00] at ( 53.77,  9.60) {1};

\node[text=drawColor,anchor=base,inner sep=0pt, outer sep=0pt, scale=  1.00] at ( 79.30,  9.60) {2};

\node[text=drawColor,anchor=base,inner sep=0pt, outer sep=0pt, scale=  1.00] at (104.83,  9.60) {4};

\node[text=drawColor,anchor=base,inner sep=0pt, outer sep=0pt, scale=  1.00] at (130.37,  9.60) {8};

\node[text=drawColor,anchor=base,inner sep=0pt, outer sep=0pt, scale=  1.00] at (181.43,  9.60) {32};

\node[text=drawColor,anchor=base,inner sep=0pt, outer sep=0pt, scale=  1.00] at (206.96,  9.60) {64};

\node[text=drawColor,anchor=base,inner sep=0pt, outer sep=0pt, scale=  1.00] at (232.50,  9.60) {128};

\node[text=drawColor,anchor=base,inner sep=0pt, outer sep=0pt, scale=  1.00] at (258.03,  9.60) {256};
\end{scope}
\begin{scope}
\path[clip] (  0.00,  0.00) rectangle (267.40,173.45);
\definecolor{drawColor}{RGB}{0,0,0}

\node[text=drawColor,anchor=base,inner sep=0pt, outer sep=0pt, scale=  1.00] at (155.90,  3.60) {$k$};
\end{scope}
\begin{scope}
\path[clip] ( 45.60, 31.20) rectangle (266.20,172.25);
\definecolor{drawColor}{RGB}{255,177,192}

\path[draw=drawColor,line width= 0.4pt,dash pattern=on 1pt off 3pt ,line join=round,line cap=round] ( 53.77, 39.04) --
	( 79.30, 37.78) --
	(104.83, 37.13) --
	(130.37, 36.81) --
	(155.90, 36.64) --
	(181.43, 36.56) --
	(206.96, 36.52) --
	(232.50, 36.50) --
	(258.03, 36.49);
\definecolor{drawColor}{RGB}{197,78,109}

\path[draw=drawColor,line width= 0.4pt,line join=round,line cap=round] ( 53.77, 39.06) --
	( 79.30, 39.16) --
	(104.83, 39.03) --
	(130.37, 39.04) --
	(155.90, 39.06) --
	(181.43, 39.02) --
	(206.96, 39.03) --
	(232.50, 39.03) --
	(258.03, 39.04);

\path[draw=drawColor,line width= 0.4pt,dash pattern=on 4pt off 4pt ,line join=round,line cap=round] ( 53.77, 38.92) --
	( 79.30, 37.82) --
	(104.83, 37.13) --
	(130.37, 36.79) --
	(155.90, 36.64) --
	(181.43, 36.56) --
	(206.96, 36.52) --
	(232.50, 36.50) --
	(258.03, 36.49);
\definecolor{drawColor}{RGB}{216,198,131}

\path[draw=drawColor,line width= 0.4pt,dash pattern=on 1pt off 3pt ,line join=round,line cap=round] ( 53.77, 49.53) --
	( 79.30, 44.02) --
	(104.83, 41.09) --
	(130.37, 39.53) --
	(155.90, 38.71) --
	(181.43, 38.29) --
	(206.96, 38.08) --
	(232.50, 37.97) --
	(258.03, 37.92);
\definecolor{drawColor}{RGB}{143,118,0}

\path[draw=drawColor,line width= 0.4pt,line join=round,line cap=round] ( 53.77, 49.68) --
	( 79.30, 49.52) --
	(104.83, 49.58) --
	(130.37, 49.48) --
	(155.90, 49.51) --
	(181.43, 49.48) --
	(206.96, 49.49) --
	(232.50, 49.51) --
	(258.03, 49.50);

\path[draw=drawColor,line width= 0.4pt,dash pattern=on 4pt off 4pt ,line join=round,line cap=round] ( 53.77, 49.58) --
	( 79.30, 44.51) --
	(104.83, 41.22) --
	(130.37, 39.48) --
	(155.90, 38.73) --
	(181.43, 38.32) --
	(206.96, 38.08) --
	(232.50, 37.98) --
	(258.03, 37.88);
\definecolor{drawColor}{RGB}{131,216,167}

\path[draw=drawColor,line width= 0.4pt,dash pattern=on 1pt off 3pt ,line join=round,line cap=round] ( 53.77, 62.56) --
	( 79.30, 53.74) --
	(104.83, 48.62) --
	(130.37, 45.78) --
	(155.90, 44.25) --
	(181.43, 43.52) --
	(206.96, 43.09) --
	(232.50, 42.87) --
	(258.03, 42.76);
\definecolor{drawColor}{RGB}{0,143,61}

\path[draw=drawColor,line width= 0.4pt,line join=round,line cap=round] ( 53.77, 62.37) --
	( 79.30, 62.78) --
	(104.83, 62.60) --
	(130.37, 62.71) --
	(155.90, 62.67) --
	(181.43, 62.57) --
	(206.96, 62.56) --
	(232.50, 62.54) --
	(258.03, 62.53);

\path[draw=drawColor,line width= 0.4pt,dash pattern=on 4pt off 4pt ,line join=round,line cap=round] ( 53.77, 62.33) --
	( 79.30, 54.68) --
	(104.83, 49.02) --
	(130.37, 45.97) --
	(155.90, 44.22) --
	(181.43, 43.48) --
	(206.96, 43.11) --
	(232.50, 42.93) --
	(258.03, 42.74);
\definecolor{drawColor}{RGB}{119,211,236}

\path[draw=drawColor,line width= 0.4pt,dash pattern=on 1pt off 3pt ,line join=round,line cap=round] ( 53.77, 80.02) --
	( 79.30, 69.75) --
	(104.83, 63.28) --
	(130.37, 59.60) --
	(155.90, 57.63) --
	(181.43, 56.55) --
	(206.96, 56.02) --
	(232.50, 55.71) --
	(258.03, 55.62);
\definecolor{drawColor}{RGB}{0,140,176}

\path[draw=drawColor,line width= 0.4pt,line join=round,line cap=round] ( 53.77, 80.04) --
	( 79.30, 79.95) --
	(104.83, 79.89) --
	(130.37, 79.85) --
	(155.90, 80.00) --
	(181.43, 80.01) --
	(206.96, 79.95) --
	(232.50, 79.91) --
	(258.03, 79.88);

\path[draw=drawColor,line width= 0.4pt,dash pattern=on 4pt off 4pt ,line join=round,line cap=round] ( 53.77, 79.56) --
	( 79.30, 71.86) --
	(104.83, 66.08) --
	(130.37, 61.43) --
	(155.90, 58.12) --
	(181.43, 56.55) --
	(206.96, 55.79) --
	(232.50, 55.69) --
	(258.03, 55.63);
\definecolor{drawColor}{RGB}{221,184,249}

\path[draw=drawColor,line width= 0.4pt,dash pattern=on 1pt off 3pt ,line join=round,line cap=round] ( 53.77,101.81) --
	( 79.30, 94.16) --
	(104.83, 89.05) --
	(130.37, 85.97) --
	(155.90, 84.30) --
	(181.43, 83.44) --
	(206.96, 83.03) --
	(232.50, 82.78) --
	(258.03, 82.67);
\definecolor{drawColor}{RGB}{158,89,199}

\path[draw=drawColor,line width= 0.4pt,line join=round,line cap=round] ( 53.77,101.89) --
	( 79.30,101.68) --
	(104.83,101.74) --
	(130.37,101.61) --
	(155.90,101.65) --
	(181.43,101.86) --
	(206.96,101.67) --
	(232.50,101.74) --
	(258.03,101.73);

\path[draw=drawColor,line width= 0.4pt,dash pattern=on 4pt off 4pt ,line join=round,line cap=round] ( 53.77,102.09) --
	( 79.30, 96.56) --
	(104.83, 94.26) --
	(130.37, 92.82) --
	(155.90, 92.65) --
	(181.43, 92.66) --
	(206.96, 93.30) --
	(232.50, 93.07) --
	(258.03, 93.40);
\definecolor{drawColor}{RGB}{0,0,0}

\path[draw=drawColor,line width= 0.4pt,line join=round,line cap=round] ( 45.60,172.25) rectangle (140.89,124.25);
\definecolor{drawColor}{gray}{0.28}

\path[draw=drawColor,line width= 0.4pt,line join=round,line cap=round] ( 54.60,160.25) -- ( 72.60,160.25);

\path[draw=drawColor,line width= 0.4pt,dash pattern=on 4pt off 4pt ,line join=round,line cap=round] ( 54.60,148.25) -- ( 72.60,148.25);
\definecolor{drawColor}{gray}{0.67}

\path[draw=drawColor,line width= 0.4pt,dash pattern=on 1pt off 3pt ,line join=round,line cap=round] ( 54.60,136.25) -- ( 72.60,136.25);
\definecolor{drawColor}{RGB}{0,0,0}

\node[text=drawColor,anchor=base west,inner sep=0pt, outer sep=0pt, scale=  1.00] at ( 81.60,156.80) {naive};

\node[text=drawColor,anchor=base west,inner sep=0pt, outer sep=0pt, scale=  1.00] at ( 81.60,144.80) {censor};

\node[text=drawColor,anchor=base west,inner sep=0pt, outer sep=0pt, scale=  1.00] at ( 81.60,132.80) {censor (MC)};
\end{scope}
\end{tikzpicture}}

  \caption{The attacker's share of committed votes as a function of
    quorum size for $\alpha \in \left\{\frac{1}{50}, \frac{1}{10}, \frac{1}{5},
    \frac{1}{3}, \frac{1}{2}\right\}$ (bottom-up) in two independent
    simulations (network and MC).}
  \label{fig:votes}
\end{figure}

\subsection{Overhead} \label{ssec:overhead}

\nc{} requires one message broadcast per block, namely the block itself,
independent of the number of participating nodes. \theProtocol{} adds~$\qsize$
message broadcasts per block---one for each vote. Votes are much smaller than
blocks. Under the conservative assumptions of 256 bits for block reference and
public key, and 64 bits for the puzzle solution, a vote is 72\,B.%
\footnote{Bitcoin shortens public keys to 160 bits and uses solutions of 32
bits. Its blocks are in the order of 1\,MB.}%

The number of messages is constant in the number of nodes, like in Bitcoin.
However, block headers grow. \theProtocol{} must store the complete quorum with
$\qsize$ puzzle solutions. This overhead matters because the header is
replicated in all nodes that want to verify the blockchain in the future.

Assuming the same vote size and the most robust case analyzed ($\qsize=256$),
the storage overhead is about 10\,kB per block.  This is less than 1\,\% of
Bitcoin's average block size in 2019.
With this choice of~$\qsize$, falsely accepting a quorum as unique is much
less likely than guessing a 128-bit key in one attempt.
Table~\ref{tab:overhead} (in the appendix) shows the storage overhead per block
and the associated probability of ambiguity at expected optimistic quorum
time (Corollary~\ref{cor:poa_at_ev}) for different choices of~$\qsize$.
We argue that the benefits of the protocol outweigh its storage costs and leave
the exploration of compression techniques to future work.

\section{Discussion} \label{sec:discussion}

\subsection{Relation to Other Distributed Logs} \label{ssec:related}

\begin{table}
  \small
  \newcommand{\cm}{\checkmark}
  \newcommand{\sw}{(\cm)}
  \newcommand{\mc}[2]{\multicolumn{#1}{l}{#2}}
  \newcommand{\rh}[1]{\rotatebox{90}{#1}}
  \caption{A comparison of related distributed log protocols.}
  \label{tab:design-space}
  \centering
  \begin{adjustbox}{max width=\linewidth}
    \begin{tabular}{lcccccccc}
      \toprule
                                  & \rh{PBFT} & \rh{HotStuff}  & \rh{Proof-of-Stake}  & \rh{Nakamoto} & \rh{Bitcoin-NG} & \rh{Byzcoin} & \rh{Bobtail}  & \rh{\theProtocol{}} \\
      \midrule
      $\#$ nodes   & 10         & $10^2$              & $10^3$                    & $10^3$            & $10^3$               & $10^3$            & $10^3$             & $10^3$                   \\
      \midrule
      committee                   &           &                & \cm                  & \sw          & \sw             & \cm          & \cm           & \cm                 \\
      \midrule
      permissioned               \\
      \quad - network             & \cm       & \cm           \\
      \quad - committee           &           &                & \cm                 \\
      \midrule
      \mc{9}{resource binding {\scriptsize (see Fig.~\ref{fig:btp_vs_bti})}} \\
      \quad - BTP                 &           &                &                      & \cm          &                 &              & \cm          \\
      \quad - BTI                 &           &                &                      &              & \cm             & \cm          &               & \cm                 \\
      \midrule
      sidechain                   &           &                &                      &              & \cm             & \cm          &               &                     \\
      \midrule
      finality                    & \cm       & \cm            & \cm                  &              &                 & \cm          &               & \cm                 \\
      \bottomrule                \\[-2ex]
     \mc{9}{\scriptsize \sw{}: Bitcoin and Bitcoin-NG use single-node committees.}
    \end{tabular}
  \end{adjustbox}
  \undef{\cm}
  \undef{\sw}
  \undef{\mc}
  \undef{\rh}
\end{table}

New distributed log protocols are proposed almost every month.
We do not claim to know all of them and we do not attempt to provide a complete
map of the design space, since other researchers have specialized on this
task~\cite{bano2019SoKConsensus, cachin2017BlockchainConsensus}.
Instead, we compare \theProtocol{} to some of its closest relatives along selected dimensions (see~Table~\ref{tab:design-space}).

\subsubsection{Number of Nodes} \label{sssec:table_size}
Early BFT protocols were designed for a small number of nodes.
PBFT~\cite{castro2002PracticalByzantine}, for example, is proven secure under
the Byzantine assumptions \ref{bft-network} and \ref{bft-adversary}.
It requires multiple rounds of voting to reach consensus on a single
value. The communication complexity of $O(n^2)$ renders it
impractical for more than a dozen nodes $n$.

HotStuff~\cite{yin2019HotStuffBFT} ensures safety under the same assumptions, 
but increases the rate of confirmed values to
one per round of voting.  Its key idea is to pipeline the commit phases of
iterative consensus (recall Fig.~\ref{fig:pipeline}). Moreover, it reduces
communication complexity to $O(n)$ by routing all communication through a leader.
These two changes make HotStuff practical for larger networks. 
However, all correct nodes actively participate (send messages) for each block.

\subsubsection{Committee} \label{sssec:table_committee}
Protocols designed for even larger scale reduce communication
complexity further by electing committees. Only committee members participate actively. 
All other nodes wait until they become part of a committee.

In \nc, write-access to the ledger is controlled by a proof-of-work puzzle.
In each round, one node -- the finder of the block --
broadcasts a message. Consequently, successful miners can be
interpreted as single-node committees.
In Bobtail~\cite{bissias2020BobtailImproved} and \theProtocol{}, multiple proof-of-work puzzles
are solved per block.  Consequently the committee size is greater
than one.
The committee approach is also followed by proof-of-stake protocols. Here,
committee membership is tied to the possession of transferable digital assets
(stake).

\subsubsection{Permissioned} \label{sssec:table_permissioned}
As stated earlier (Sect.~\ref{sec:pow_quorum}), assumption~\ref{bft-network} can only be
satisfied by restricting access to the network based on identities assigned by
an external identity provider or gatekeeper.  Consequently, protocols relying
on this assumption are permissioned on the network layer. %

Proof-of-stake internalizes the gatekeeping functionality by restricting access
to the committee based on the distribution of stake. While participating as
a node is possible without permission, access to the committee is still
permissioned.

In proof-of-work systems any agent can join and leave the network and has a
(fair) chance of becoming committee member without obtaining permission from a
gatekeeper.%
\footnote{We ignore the role of the supply chain for puzzle solving equipment.}

\subsubsection{Resource Binding} \label{sssec:table_binding}
Proof-of-work can be seen as a commitment of resources to a value. %
Typically, these values are chosen locally on each node. Freshness is
guaranteed by including a reference to recent puzzle solutions in the value.
We distinguish between resources bound to a proposal (BTP) for an
upcoming state update and resources bound to an identifier (BTI) used for entering the committee.

\begin{figure}
  \newcommand{\tevent}[2]{
    \draw (#1, 3pt) node [above, align=center] {#2} -- (#1, -3pt);
  }
  \newcommand{\phase}[2]{
    \draw (#1, -3pt) node [below, align=center] {#2};
  }

  \centering

  Bound to proposal (BTP)\\
  \vspace{0.8em}
  \begin{tikzpicture}[x=\columnwidth/11]
    \footnotesize
    \draw[] (0,0) -- (5.6,0);
    \draw[decoration=zigzag, decorate] (5.6,0) -- (8.4,0);
    \draw[->] (8.4,0) -- (11,0) node [below left] {time};
    \phase{7}{resource binding}
    \tevent{10}{publish}
    \tevent{5.5}{define\\proposal}
    \tevent{8.5}{find\\solution}
  \end{tikzpicture}

  \vspace{1.2em}
  Bound to identifier (BTI)\\
  \vspace{0.8em}
  \begin{tikzpicture}[x=\columnwidth/11]
    \footnotesize
    \draw[] (0,0) -- (1.1,0);
    \draw[decoration=zigzag, decorate] (1.1,0) -- (3.9,0);
    \draw[->] (3.9,0) -- (11,0) node [below left] {time};
    \phase{2.5}{resource binding}
    \tevent{10}{publish}
    \tevent{1}{define\\identifier}
    \tevent{4}{find\\solution}
    \tevent{5.5}{define\\proposal}
  \end{tikzpicture}

  \vspace{0.6em}
  \hfill\tikz\draw[decoration=zigzag, decorate] (0,0) -- (1.5,0)
    node [right] {\footnotesize competition};
  \caption{Resources can be bound to concrete proposals or to
  identifiers, which are later used to sign proposals.}
  \label{fig:btp_vs_bti}
  \undef{\tevent}
  \undef{\phase}
\end{figure}

\nc{} uses BTP. Nodes form a proposal for the next block locally and then
start to solve a proof-of-work for this proposal. If they are successful in
finding a puzzle solution, they share their proposal. This
process is depicted in the upper half of Figure~\ref{fig:btp_vs_bti}.

Bitcoin-NG~\cite{eyal2016BitcoinNGScalable} innovated by translating the
concept of leader election from the BFT literature (\eg{}
\cite{dwork1988ConsensusPresence, garciamolina1982ElectionsDistributed,
ongaro2014SearchUnderstandable}) to \nc{}.  The miner of a block (elected
leader) becomes responsible for
appending multiple consecutive (micro) blocks until the next leader emerges with the next mined block. 
In our framework, Bitcoin-NG adds throughput by switching from BTP to BTI in \nc{}.
A more elaborate BTI protocol is Byzcoin~\cite{kogias2016EnhancingBitcoin}. It forms a committee over the
last $\qsize$ successful miners.  This rolling committee is then responsible for appending
micro blocks. Byzcoin uses PBFT to reach final consensus within each committee, thereby shifting control over the micro blocks from a single node (Bitcoin-NG) to multiple nodes.

\theProtocol{} is a BTI protocol: nodes bind resources to identifiers by mining votes. If they happen to lead when the quorum is complete, they sign a block proposal with their secret key.
The lower half of Figure~\ref{fig:btp_vs_bti} shows this order of events.
\annot{Artifact of brainstorm?:
These identifiers can be ephemeral pseudonyms and are not necessarily linkable
to the identity of the agent.} %

Bobtail extends \theProtocol{} by binding a preliminary transaction
list into the proof-of-work solution of each vote.%
\footnote{Since Bobtail inspired \theProtocol{}, a better frame is to see \theProtocol{} as
simplification of Bobtail rather than Bobtail as an extension to \theProtocol{}.}
This BTP aspect of Bobtail adds significant complexity to the voting logic in
order to prevent the reuse of votes for different competing proposals. As described in
Section~\ref{sec:evaluation}, \theProtocol{} makes the reuse of votes a key feature.

\subsubsection{Sidechain} \label{sssec:table_sidechain}

The sequences of micro blocks in Bitcoin-NG, Byzcoin, and also
Thunderella~\cite{pass2018ThunderellaBlockchains} are often referred to as
sidechains. Sidechains can serve several purposes, such as increasing
throughput (Bitcoin-NG) or adding finality (Byzcoin). However, since different
mechanisms are used to advance different chains, synchronization is a major
problem. Bitcoin-NG tackles it with incentives, Thunderella focuses on an
optimistic case, and Byzcoin leaves open which chain has priority. Sidechains
often involve high protocol complexity because different consensus mechanisms
are stacked onto each other: the protocols require a distributed log in order
to provide a distributed log (with different properties). By contrast,
\theProtocol{} provides an improved distributed log directly from a broadcast
network and proof-of-work.

\subsubsection{Finality} \label{sssec:table_finality}

The lack of finality in \nc{} exposes it to many attacks \cite{bonneau2016WhyBuy,gervais2016SecurityPerformance,budish2018EconomicLimits,auer2019DoomsdayEconomics}.
So far, according to conventional wisdom, eventual consistency has been accepted as the price of a truly
permissionless system. Byzcoin challenged this view with a stacked solution involving sidechains. 
\theProtocol{} achieves the same at lower protocol complexity using proof-of-work quorums. Their stochastic uniqueness allows us to transfer the commit process from the permissioned world to the permissionless.

\subsection{Other Related Protocols}

Not included in Table~\ref{tab:design-space} are protocol proposals that
replace the linear data structure of the distributed log with more general
directed acyclic graphs (DAGs)~\cite{sompolinsky2015SecureHighRate,
sompolinsky2016SPECTREFast}. This promises higher scalability
and faster first confirmation in latent networks at the cost of additional
complexity on the application layer, which cannot rely on the total order and
uniqueness of state updates anymore. Also
Fruitchain~\cite{pass2017FruitChainsFair} can be interpreted as a DAG: it
recognizes solutions to hard and easy puzzles but hides the DAG's complexity
from the application layer by not allowing `fruits' to carry state updates.

An even more radical approach is to drop the distributed log completely and
implement a digital asset directly on a secure (source-ordered) broadcast
without consensus~\cite{guerraoui2019ConsensusNumber}. However,
this approach restricts the versatility of the application layer. 
For example, arbitrary smart contract logic is not supported.

\subsection{Limitations and Future Work} \label{ssec:limits}

We have presented a protocol that achieves finality in a permissionless setting
under axiomatic exclusion of the failure modes
\ref{pow-network} and \ref{pow-adversary}, and the acceptance of a negligible failure
probability. The assumption on
\ref{pow-network} and \ref{pow-adversary} are also made for security proofs
of \nc{}~\cite{garay2015BitcoinBackbone,pass2017AnalysisBlockchain}. 
Nevertheless, it is worth discussing their suitability.

Excluding \ref{pow-network} corresponds to assuming a fixed, network-wide
compute power~$\lambda$. But agents can add and remove nodes at their
willing.  Even if the number of nodes is fixed, the computational power of each
node is not. We observe in practice that a control loop, known as difficulty
adjustment (DA), can compensate changes of~$\lambda$ up to a certain degree.
But ample literature shows that the deployed DA algorithms are not
optimal~\cite{kraft2016DifficultyControl, meshkov2017ShortPaper, fullmer2018AnalysisDifficulty,
hovland2017NonlinearFeedback}, especially in case of sudden changes of~$\lambda$.  We
argue that proof-of-work quorums can support more precise difficulty adjustment
algorithms.  A higher quorum size implies more votes and hence more data points
to inform the algorithm about changes of~$\lambda$.

The same effect can be exploited for detecting network-level attacks, such as
eclipse and splits, more accurately. 
(Appendix~\ref{apx:detect} provides additional details.) 
This is relevant in the context of the CAP theorem~\cite{gilbert2002BrewerConjecture}, which tells us that every
distributed system has to sacrifice one out of consistency, availability and
partition tolerance.  \theProtocol{}, as presented, favors availability over
consistency.  It does not implement a mechanism for detecting network splits,
even though it is possible at high confidence for big quorum sizes. The
trade-off could be changed in favor of consistency.  If a split is detected,
the protocol withholds commits (and may notify the application layer in order
to trigger out-of-band resolutions).

The second failure mode,~\ref{pow-adversary}, can be catastrophic and is hard
to rule out. %
We are not aware of any argument that
bounds $\alpha$ to a constant below 50\,\% for any proof-of-work system. In fact,
>50\,\% attacks have been mounted against smaller instances of \nc{} in
practice~\cite{cryptoslate2019percent51attacks}.

Our network simulation in Section~\ref{sec:evaluation} models exponentially
distributed message propagation times. This distribution puts the system
under pressure, but it is not very realistic. Future work might put the
simulation on a more structured network topology. However, since the literature reports a
significant discrepancy between observed topologies and what cryptocurrencies
are designed for~\cite{delgadosegura2019TxProbeDiscovering, mariem2020Allthat},
it is not obvious what an appropriate topology would look like.

Similarly, we leave unexplored how to disseminate \theProtocol{}'s smaller vote
messages efficiently. Votes easily fit into single Internet packets and their
verification requires only one hash evaluation. It might be possible to improve
vote propagation times using UDP-based structured
broadcast~\cite{rohrer2019KadcastStructured} instead of the gossip broadcast
used in many cryptocurrencies.

Finally, we refrain from designing an incentive mechanism for \theProtocol{} for the reasons stated in Section~\ref{ssec:incentives}.
A principled approach would be to explore reward-optimizing strategies (combined
withholding of votes and blocks) automatically using Markov Decision
Processes~\cite{sapirshtein2016OptimalSelfish, zhang2019LayCommon} or even more
sophisticated Reinforcement Learning
techniques~\cite{hou2019SquirRLAutomating}.

\section{Conclusion} \label{sec:conclusion}

We understand \theProtocol{} as a positive example to support our claim that it is
possible to build permissionless distributed logs \emph{with} finality \emph{directly} 
from proof-of-work. The claim is
tentatively supported (with analysis and simulations) until \theProtocol{} is broken. 
We invite the community to prove our claim wrong, and provide running code online to facilitate this task.%
\repofootnote{}
It is not safe to use this code in  systems dealing with real values.

Regardless of whether our claim is true or false, the identified conflict between inclusiveness and security is instructive, and the
associated theory of quorums on stochastic processes may find applications elsewhere. Since it comprises \nc{} 
as a special case, it also contributes to a better understanding of the role of proof-of-work in known systems that ``work in practice, but [so far] not in theory''~\cite{bonneau2015SoKResearch}.

If our claim holds, we have found a way to build  permissionless distributed logs from proof-of-work that can serve many applications better than existing systems. However, proof-of-work is a very wasteful way of establishing consensus. It should be avoided whenever possible. Only if there is no alternative to proof-of-work, \theProtocol{} should be considered as a replacement for \nc{}.

\bibliography{references}

\appendix

\section{Proofs, Figures, and Visualizations}

\begin{figure*}
  \begin{subfigure}[b]{\linewidth}
    \includegraphics[page=8, width=\linewidth]{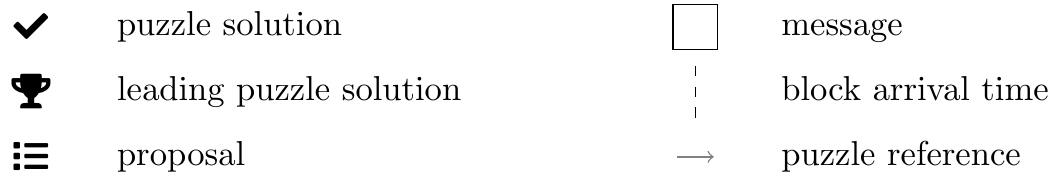}
    \caption{In Bitcoin, puzzles are solved sequentially. Solutions are bound
      to block proposals, implying exponentially distributed block intervals.}
    \label{fig:timeline_bitcoin}
  \end{subfigure}
  \begin{subfigure}[b]{\linewidth}
    \vspace{2ex}
    \includegraphics[page=31, width=\linewidth]{figures/timeline.pdf}
    \caption{In \theProtocol{}, smaller puzzles are solved in parallel. One of the
      solutions is chosen as leader, the corresponding miner collects the
      quorum ($\qsize=8$ votes) and proposes the next block. Thereby, \theProtocol{} enables
    more regular block intervals and more frequent rewards for miners. }
    \label{fig:timeline_hotpow}
  \end{subfigure}
  \begin{subfigure}[b]{\linewidth}
    \vspace{2ex}
    \centering
    \includegraphics[page=1]{figures/timeline.pdf}
    \caption{Symbols and their meaning.}
    \label{fig:timeline_legend}
  \end{subfigure}
  \caption{Simulated executions of Bitcoin and \theProtocol{} on $n=7$ nodes
  ($y$-axis) over time ($x$-axis).}
  \label{fig:timeline}
\end{figure*}

\label{apx:proofs}
\paragraph{Lemma \ref{lem:poa_poisson}}
The POA for the Poisson process $P_\lambda$ is given by
\[
  \poa_{P_\lambda, \qsize}(t) =
  1 - e^{-\lambda t} \sum_{i=0}^{2\qsize -1}{\frac{(\lambda t)^i}{i!}} \,.
\]
\begin{proof}
  $P_\lambda$ has the following
  properties~\cite[p.~389]{stewart2009ProbabilityMarkov}:
  \begin{enumerate}
    \item $\prob{P_\lambda(0) = 0 } = 1$,
    \item $P_\lambda(t) - P_\lambda(s) \drawn \distPoisson(\lambda
      \cdot(t-s))$ for all $s < t$, and
    \item for $n \in \nats$ and $0 < t_1 < \dots < t_n$, the family of random
      variables \[\{P_\lambda(t_i) - P_\lambda(t_{i-1})\mid 2 \leq i \leq
      n\}\] is stochastically independent.
  \end{enumerate}

  According to Definition~\ref{def:poa},
  \begin{align}
    \poa_{P_\lambda, \qsize}(t)
    = & \prob{P_\lambda(t) \geq 2 \qsize} \\
    = & 1 - \prob{P_\lambda(t) \leq 2 \qsize - 1} \,.
  \end{align}

  By setting $s=0$ in property 2 of the Poisson process and using property 1,
  we conclude that $P_\lambda(t) \drawn \distPoisson(\lambda t)$. By
  evaluating the cumulative distribution function of the Poisson distribution
  \begin{align}
    F_{\distPoisson}(n; \lambda') =
    e^{-\lambda'} \sum_{i=0}^{\lfloor n \rfloor}{\frac{\lambda'^i}{i!}}
  \end{align}
  for $n = 2\qsize - 1$ and $\lambda' = \lambda t$, we obtain the stated result.
\end{proof}

\paragraph{Lemma~\ref{lem:qt_possion}}
The optimistic $\qsize$-quorum time for the Poisson process is Erlang
distributed with shape parameter~$\qsize$ and rate parameter~$\lambda$, in
short
\[
  T_{P_\lambda,\qsize} \drawn \distErlang(\qsize, \lambda) \,.
\]

\begin{proof}
  The time between two consecutive count events of $P_\lambda$ is exponentially
  distributed with rate parameter $\lambda$. The times between any two
  consecutive count events are stochastically independent. The sum of $\qsize$
  independent and identically distributed exponential random variables is
  Erlang distributed~\cite[p.~146]{stewart2009ProbabilityMarkov} with shape
  parameter~$\qsize$ and rate parameter~$\lambda$.
\end{proof}

\paragraph{Theorem\,~\ref{thm:negligible}}
For the Poisson process, the probability of ambiguity at the expected quorum
time is negligible in the quorum size~$\qsize$.

\begin{proof}
  Let
  \begin{align}
    f(k) :=
    \poa_{P_\lambda, \qsize}(\tEv) =
    1 - e^{-\qsize} \sum_{i=0}^{2\qsize -1}{\frac{\qsize^i}{i!}} \,.
  \end{align}
  Our first observation is that $f(k)$ can be expressed in terms of the
  regularized incomplete Gamma function $P(\alpha, k)$. According to
  \dlmf{8.4.E9},
  \begin{align}
    f(k) = P(2k,k) \,.
  \end{align}
  Following the definition of the regularized incomplete Gamma function
  (see \dlmf{8.2.E4}), we obtain
  \begin{align}
    f(k) = \frac{\gamma(2k,k)}{(2k-1)!} \,, \label{eq:gamma_frac}
  \end{align}
  with the incomplete Gamma function (see \dlmf{8.2.E1})
  \begin{align}
    \gamma(\alpha,k) = \int_{0}^{k}{t^{\alpha -1}e^{-t} dt}\,.
  \end{align}
  We will prove the theorem by providing an (asymptotic) upper bound for
  $f(k)$ that decreases exponentially in $k$. Stirling's
  Approximation~\cite{robbins1955RemarkStirling} provides a useful lower bound
  for the factorial in the denominator of Equation~\ref{eq:gamma_frac}:
  \begin{align}
    n! \geq \sqrt{2\pi}\,n^{n+\frac{1}{2}}\,e^{-n} \label{eq:stirling_lower}
  \end{align}
  We proceed with an upper bound for the enumerator as follows. Let $g(t) =
  t^{2k-1}e^{-t}$ be the function to integrate for $\alpha = 2k$.  Like for
  integrals in general,
  \begin{align}
    \gamma(2k,k) = \int_{0}^{k}{g(t)\,dt} \leq k \cdot \max_{t \in [0,k]}{g(t)} \,.
  \end{align}
  The derivative of $g$ is $g'(t)=e^{-t}(2k-t-1) t^{2k-2}$. For $t \in [0,k]$ the
  derivative $g'$ is greater than zero. Hence the function $g$ is monotonically
  increasing, the maximum is reached at the end of the interval, and
  \begin{align}
    \gamma(2k,k) \leq k^{2k}e^{-k} \,. \label{eq:enum}
  \end{align}
  Applying Approximations~\ref{eq:stirling_lower} and~\ref{eq:enum} to
  Equation~\ref{eq:gamma_frac}, yields
  \begin{align}
    f(k) &\leq \frac{k^{2k}e^{-k}}{\sqrt{2\pi}\,(2k-1)^{2k-\frac{1}{2}}\,e^{-2k+1}} \\
         &= {\left(\frac{k\sqrt{e}}{2k-1}\right)}^{2k} \sqrt{\frac{2k-1}{2\pi e^2}}
  \end{align}
  Observe that
  \begin{align}
    \limsup_{k\to\infty} \frac{\left(\frac{k\sqrt{e}}{2k-1}\right)^{2k}}{\left(\frac{\sqrt{e}}{2}\right)^{2k}}
    = \limsup_{k\to\infty} \left(\frac{2k}{2k-1}\right)^{2k} = e < \infty\,.
  \end{align}
  Thus,
  \begin{align}
    f(k) = O\left(\frac{e^k}{4^k}\sqrt{k}\right).
  \end{align}
  Since $\sqrt{k} < 1.25^k$ for $k>1$, we can conclude
  \begin{align}
    f(k) &= O\left(\frac{e^k}{4^k} 1.25^k\right) \\
         &= O\left(0.85^k\right)\,.
  \end{align}
\end{proof}

\begin{figure}
  \resizebox{\columnwidth}{!}{
\begin{tikzpicture}[x=1pt,y=1pt]
\definecolor{fillColor}{RGB}{255,255,255}
\path[use as bounding box,fill=fillColor,fill opacity=0.00] (0,0) rectangle (267.40,137.31);
\begin{scope}
\path[clip] (  0.00,  0.00) rectangle (267.40,137.31);
\definecolor{drawColor}{RGB}{0,0,0}

\path[draw=drawColor,line width= 0.4pt,line join=round,line cap=round] ( 45.60, 35.14) -- ( 45.60,132.48);

\path[draw=drawColor,line width= 0.4pt,line join=round,line cap=round] ( 45.60, 35.14) -- ( 39.60, 35.14);

\path[draw=drawColor,line width= 0.4pt,line join=round,line cap=round] ( 45.60, 54.61) -- ( 39.60, 54.61);

\path[draw=drawColor,line width= 0.4pt,line join=round,line cap=round] ( 45.60, 74.08) -- ( 39.60, 74.08);

\path[draw=drawColor,line width= 0.4pt,line join=round,line cap=round] ( 45.60, 93.55) -- ( 39.60, 93.55);

\path[draw=drawColor,line width= 0.4pt,line join=round,line cap=round] ( 45.60,113.01) -- ( 39.60,113.01);

\path[draw=drawColor,line width= 0.4pt,line join=round,line cap=round] ( 45.60,132.48) -- ( 39.60,132.48);

\node[text=drawColor,anchor=base east,inner sep=0pt, outer sep=0pt, scale=  1.00] at ( 33.60, 31.69) {1.0};

\node[text=drawColor,anchor=base east,inner sep=0pt, outer sep=0pt, scale=  1.00] at ( 33.60, 51.16) {1.1};

\node[text=drawColor,anchor=base east,inner sep=0pt, outer sep=0pt, scale=  1.00] at ( 33.60, 70.63) {1.2};

\node[text=drawColor,anchor=base east,inner sep=0pt, outer sep=0pt, scale=  1.00] at ( 33.60, 90.10) {1.3};

\node[text=drawColor,anchor=base east,inner sep=0pt, outer sep=0pt, scale=  1.00] at ( 33.60,109.57) {1.4};

\node[text=drawColor,anchor=base east,inner sep=0pt, outer sep=0pt, scale=  1.00] at ( 33.60,129.04) {1.5};

\path[draw=drawColor,line width= 0.4pt,line join=round,line cap=round] ( 45.60, 31.20) --
	(266.20, 31.20) --
	(266.20,136.11) --
	( 45.60,136.11) --
	( 45.60, 31.20);
\end{scope}
\begin{scope}
\path[clip] (  0.00,  0.00) rectangle (267.40,137.31);
\definecolor{drawColor}{RGB}{0,0,0}

\node[text=drawColor,rotate= 90.00,anchor=base,inner sep=0pt, outer sep=0pt, scale=  1.00] at (  7.20, 83.66) {time to commit};
\end{scope}
\begin{scope}
\path[clip] (  0.00,  0.00) rectangle (267.40,137.31);
\definecolor{drawColor}{RGB}{0,0,0}

\path[draw=drawColor,line width= 0.4pt,line join=round,line cap=round] ( 53.77, 31.20) -- (258.03, 31.20);

\path[draw=drawColor,line width= 0.4pt,line join=round,line cap=round] ( 53.77, 31.20) -- ( 53.77, 25.20);

\path[draw=drawColor,line width= 0.4pt,line join=round,line cap=round] ( 94.62, 31.20) -- ( 94.62, 25.20);

\path[draw=drawColor,line width= 0.4pt,line join=round,line cap=round] (135.47, 31.20) -- (135.47, 25.20);

\path[draw=drawColor,line width= 0.4pt,line join=round,line cap=round] (176.33, 31.20) -- (176.33, 25.20);

\path[draw=drawColor,line width= 0.4pt,line join=round,line cap=round] (217.18, 31.20) -- (217.18, 25.20);

\path[draw=drawColor,line width= 0.4pt,line join=round,line cap=round] (258.03, 31.20) -- (258.03, 25.20);

\node[text=drawColor,anchor=base,inner sep=0pt, outer sep=0pt, scale=  1.00] at ( 53.77,  9.60) {0};

\node[text=drawColor,anchor=base,inner sep=0pt, outer sep=0pt, scale=  1.00] at ( 94.62,  9.60) {0.1};

\node[text=drawColor,anchor=base,inner sep=0pt, outer sep=0pt, scale=  1.00] at (217.18,  9.60) {0.4};

\node[text=drawColor,anchor=base,inner sep=0pt, outer sep=0pt, scale=  1.00] at (258.03,  9.60) {0.5};
\end{scope}
\begin{scope}
\path[clip] (  0.00,  0.00) rectangle (267.40,137.31);
\definecolor{drawColor}{RGB}{0,0,0}

\node[text=drawColor,anchor=base,inner sep=0pt, outer sep=0pt, scale=  1.00] at (155.90,  3.60) {leader failure rate};
\end{scope}
\begin{scope}
\path[clip] ( 45.60, 31.20) rectangle (266.20,136.11);
\definecolor{drawColor}{RGB}{0,0,0}

\path[draw=drawColor,line width= 0.4pt,line join=round,line cap=round] ( 53.77, 35.09) --
	( 57.86, 36.71) --
	( 74.20, 39.73) --
	( 94.62, 47.80) --
	(135.47, 61.46) --
	(176.33, 78.97) --
	(217.18,101.07) --
	(258.03,132.23);

\path[draw=drawColor,line width= 0.4pt,dash pattern=on 4pt off 4pt ,line join=round,line cap=round] ( 53.77, 35.52) --
	( 57.86, 36.22) --
	( 74.20, 36.42) --
	( 94.62, 37.97) --
	(135.47, 41.59) --
	(176.33, 45.77) --
	(217.18, 51.38) --
	(258.03, 59.09);

\path[draw=drawColor,line width= 0.4pt,dash pattern=on 1pt off 3pt ,line join=round,line cap=round] ( 53.77, 35.26) --
	( 57.86, 35.10) --
	( 74.20, 35.57) --
	( 94.62, 35.78) --
	(135.47, 36.97) --
	(176.33, 37.99) --
	(217.18, 39.41) --
	(258.03, 41.43);

\path[draw=drawColor,line width= 0.4pt,dash pattern=on 1pt off 3pt on 4pt off 3pt ,line join=round,line cap=round] ( 53.77, 35.19) --
	( 57.86, 35.19) --
	( 74.20, 35.31) --
	( 94.62, 35.41) --
	(135.47, 35.60) --
	(176.33, 35.72) --
	(217.18, 36.36) --
	(258.03, 36.88);

\path[draw=drawColor,line width= 0.4pt,line join=round,line cap=round] ( 45.60,136.11) rectangle (101.10, 64.11);

\path[draw=drawColor,line width= 0.4pt,line join=round,line cap=round] ( 54.60,112.11) -- ( 72.60,112.11);

\path[draw=drawColor,line width= 0.4pt,dash pattern=on 4pt off 4pt ,line join=round,line cap=round] ( 54.60,100.11) -- ( 72.60,100.11);

\path[draw=drawColor,line width= 0.4pt,dash pattern=on 1pt off 3pt ,line join=round,line cap=round] ( 54.60, 88.11) -- ( 72.60, 88.11);

\path[draw=drawColor,line width= 0.4pt,dash pattern=on 1pt off 3pt on 4pt off 3pt ,line join=round,line cap=round] ( 54.60, 76.11) -- ( 72.60, 76.11);

\node[text=drawColor,anchor=base,inner sep=0pt, outer sep=0pt, scale=  1.00] at ( 73.35,124.11) {$k$};

\node[text=drawColor,anchor=base west,inner sep=0pt, outer sep=0pt, scale=  1.00] at ( 81.60,108.67) {2};

\node[text=drawColor,anchor=base west,inner sep=0pt, outer sep=0pt, scale=  1.00] at ( 81.60, 96.67) {8};

\node[text=drawColor,anchor=base west,inner sep=0pt, outer sep=0pt, scale=  1.00] at ( 81.60, 84.67) {32};

\node[text=drawColor,anchor=base west,inner sep=0pt, outer sep=0pt, scale=  1.00] at ( 81.60, 72.67) {128};
\end{scope}
\end{tikzpicture}}

  \caption{Pure effect of leader failure, \ie{} \emph{without} latency. Supplement to Figure~\ref{fig:failure_real}.}
  \label{fig:failure}
\end{figure}
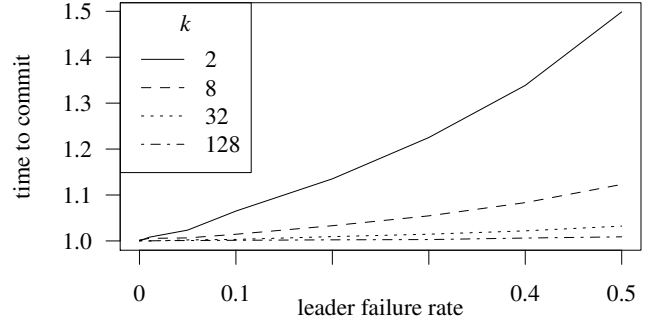

\begin{table}
  \small
  \caption{Storage overhead of \theProtocol{} consensus.}
  \label{tab:overhead}
  \begin{center}
    \begin{tabular}{ccc}
      \toprule
      quorum size & probability of ambiguity & block header
      \\
      \qsize{} & at expected quorum time & (bytes) \\
      \midrule
      1  & $0.2642$ & 72 %
      \\
      2 & $0.1429$ &  112 %
      \\
      16 & $0.0003$  & 672 %
      \\
      64 & $1.2\times 10^{-12}$ & 2.6\,k
      \\
      256 &$4\times 10^{-45}$ & 10\,k
      \\
      \bottomrule
    \end{tabular}
  \end{center}
\end{table}

\section{Monte Carlo Simulation} \label{apx:mcmc}

\begin{figure*}
  \hfil %
  \begin{tikzpicture}[>=stealth,x=15em,y=-7ex]
    \tikzstyle{state}=[]
    \tikzstyle{transition}=[->, draw, rounded corners=1em]
    \tikzstyle{prob}=[anchor=south, font=\small, xshift=-0.5em]
    \tikzstyle{intuition}=[anchor=north, font=\footnotesize, xshift=-0.5em]
    \node[state] (top) at (0,1) {$a,d,\top$};
    \node[state] (ta) at (1,0) {$a+1,d,\top$};
    \node[state] (tb) at (1,1) {$a,d+1,\bot$};
    \node[state] (tc) at (1,2) {$a,d+1,\top$};
    \path[transition] (top) |- (ta);
    \path[transition] (top) -- (tb);
    \path[transition] (top) |- (tc);
    \node[prob] at (0.5, 0) {$\alpha$};
    \node[prob] at (0.5, 1) {$(1 - \alpha)/(a + d + 1)$};
    \node[prob] at (0.5, 2) {$(1 - \alpha)\cdot(a+d)/(a + d + 1)$};
    \node[intuition] at (0.5, 0) {attacker extends lead};
    \node[intuition] at (0.5, 1) {defender obtains lead};
    \node[intuition] at (0.5, 2) {following defender catches up};
  \end{tikzpicture}
  \hfil %
  \hfil %
  \begin{tikzpicture}[>=stealth,x=15em,y=-7ex]
    \tikzstyle{state}=[]
    \tikzstyle{transition}=[->, draw, rounded corners=1em]
    \tikzstyle{prob}=[anchor=south, font=\small, xshift=-0.5em]
    \tikzstyle{intuition}=[anchor=north, font=\footnotesize, xshift=-0.5em]
    \node[state] (top) at (0,1) {$a,d,\bot$};
    \node[state] (ta) at (1,0) {$a,d+1,\bot$};
    \node[state] (tb) at (1,1) {$a+1,d,\top$};
    \node[state] (tc) at (1,2) {$a+1,d,\bot$};
    \path[transition] (top) |- (ta);
    \path[transition] (top) -- (tb);
    \path[transition] (top) |- (tc);
    \node[prob] at (0.5, 0) {$1 - \alpha$};
    \node[prob] at (0.5, 1) {$\alpha/(a + d + 1)$};
    \node[prob] at (0.5, 2) {$\alpha\cdot(a+d)/(a + d + 1)$};
    \node[intuition] at (0.5, 0) {defender extends lead};
    \node[intuition] at (0.5, 1) {attacker obtains lead};
    \node[intuition] at (0.5, 2) {following attacker catches up};
  \end{tikzpicture}
  \hfil %
  \caption{Probabilistic state transitions in the Markov Chain model for the \emph{censor}
  strategy.}
  \label{fig:mcmc}
\end{figure*}
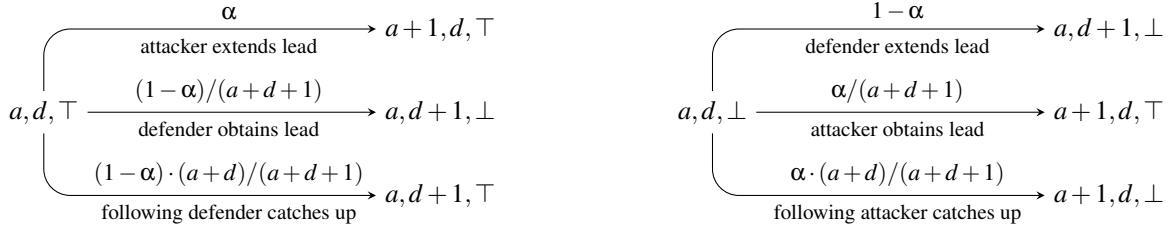

{
  \renewcommand{\wp}{\text{ with probability }}
  \newcommand{\success}{\textsc{success}}
  \newcommand{\fail}{\textsc{fail}}

We cross-check the implementation of the censor strategy and its behavior in
the network simulation (see Sect.~\ref{sssec:censoring}) using an independent
Monte Carlo simulation. We model the formation of individual quorums using
an (Absorbing) Markov Chain, but omit higher-level concepts such as
blocks and their chaining. The censor strategy is to generally withhold
votes until either the attacker can form a quorum as leader, or the defender
forms a quorum without any of the attacker's (withheld) votes. In a protocol
execution, the first case (\success{}) applies when the attacker proposes a block
which the honest nodes accept. The second case
(\fail{}) applies when the honest nodes propose a block.

\paragraph{State representation and initialization}
We model the current state as a triple $(a, d, l)$, where $a \in \nats$ denotes
the number of (withheld) attacker votes, $d \in \nats$ (for defender) denotes the number of
votes of the honest nodes, and $l \in \bools$ is true if the attacker holds the currently
smallest vote. The initial state is $(1, 0, \top)$ with probability $\alpha$
and $(0, 1, \bot)$ otherwise.

\paragraph{State transition}
Figure~\ref{fig:mcmc} shows an annotated state transition diagram.
If $l = \top$, the next state is
\begin{align*}
  &(a + 1, d, l) &&  \wp \alpha ,\\
  &(a, d + 1, \bot) && \wp \frac{1 - \alpha}{a + d + 1} \text{, and}\\
  &(a, d + 1, l) && \text{ otherwise.}
\end{align*}
If $l = \bot$, the next state is
\begin{align*}
  &(a, d + 1, l) && \wp 1 - \alpha ,\\
  &(a + 1, d, \top) && \wp \frac{\alpha}{a + d + 1} \text{, and}\\
  &(a + 1, d, l) && \text{ otherwise.}
\end{align*}

\paragraph{Termination}
If $l \wedge a + d \geq \qsize$, the simulation terminates in \success{}. 
If $\neg l \wedge d \geq \qsize$, it terminates in \fail{}. The simulation continues until one of these conditions is true.

\paragraph{Simulation} We run the model 1\,000\,000 times for each combinations
of $\alpha \in \left\{\frac{1}{50}, \frac{1}{10}, \frac{1}{5}, \frac{1}{3},
\frac{1}{2}\right\}$ and $\qsize \in \left\{1,2,4,\dots,256\right\}$.
Figure~\ref{fig:leadership} shows the fraction of cases where the simulation
terminates in \success{}. Figure~\ref{fig:votes} shows the average number of
attacker votes for the runs that end in \success{}.

} %

\section{Detecting Attacks} \label{apx:detect}

Each vote is linked to one ATV. By assumption (Sect.~\ref{sec:pow_quorum}), the
time between two consecutive ATVs is exponentially distributed with
rate~$\lambda$. In an honest network, a node regularly receives votes (and own
ATVs). A node can test the hypothesis of being eclipsed based on the
arrival of votes.  Table~\ref{tab:detect} shows after how much time (relative
to the block time) of not receiving a single vote a node can rule out a natural
course of events with confidence $p = 0.001$.  

Observe that larger quorums sizes
increase the detectability of eclipse attacks. For quorum sizes greater than 8,
eclipse attacks can be detected with confidence within a single expected block time. 
For plain \nc{} ($\qsize=1$), an equally powerful test requires an observation window of almost 7 times the expected block time.

\begin{table}
  \caption{Time until eclipse can be detected at confidence $p = 0.001$ (relative to the expected block time).}
  \label{tab:detect}
  \resizebox{\linewidth}{!}{
    \begin{tabular}{rccccccccc}
      \toprule
      \textbf{quorum size} & 1    & 2    & 4    & 8    & 16    & 32    & 64    & 128    & 256  \\
      \textbf{time}        & 6.91 & 3.45 & 1.73 & 0.86 & 0.43  & 0.22  & 0.11  & 0.05   & 0.03 \\
      \bottomrule
  \end{tabular}}
\end{table}

\end{document}